\newtheorem{problem}{Problem}
\newtheorem{theorem}{Theorem}[section]
\newtheorem{lemma}[theorem]{Lemma}
\newtheorem{fact}[theorem]{Fact}
\theoremstyle{definition}
\newtheorem{definition}[theorem]{Definition}
\newtheorem{remark}[theorem]{Remark}
\newenvironment{fminipage}%
  {\begin{Sbox}\begin{minipage}}%
  {\end{minipage}\end{Sbox}\fbox{\TheSbox}}
\def\pleq{\preccurlyeq}
\def\defeq{\stackrel{\mathrm{def}}{=}}
\def\setof#1{\left\{#1  \right\}}
\def\sizeof#1{\left|#1  \right|}
\def\ceil#1{\left\lceil #1 \right\rceil}
\def\union{\cup}
\def\eps{\epsilon}
\def\abs#1{\left|#1  \right|}
\def\trace#1{\mathrm{Tr} \left(#1 \right)}
\def\norm#1{\left\| #1 \right\|}
\newcommand\PPi{\boldsymbol{\Pi}}
\newcommand\er{R_{\mathrm{eff}}}
\def\aa{\pmb{\mathit{a}}}
\newcommand\bb{\boldsymbol{\mathit{b}}}
\newcommand\cc{\boldsymbol{\mathit{c}}}
\newcommand\ee{\boldsymbol{\mathit{e}}}
\newcommand\ii{\boldsymbol{\mathit{i}}}
\newcommand\vv{\boldsymbol{\mathit{v}}}
\newcommand\xx{\boldsymbol{\mathit{x}}}
\renewcommand\AA{\boldsymbol{\mathit{A}}}
\newcommand\BB{\boldsymbol{\mathit{B}}}
\newcommand\CC{\boldsymbol{\mathit{C}}}
\newcommand\II{\boldsymbol{\mathit{I}}}
\newcommand\MM{\boldsymbol{\mathit{M}}}
\newcommand\LL{\boldsymbol{\mathit{L}}}
\newcommand\HH{\boldsymbol{\mathit{H}}}
\newcommand\PP{\boldsymbol{\mathit{P}}}
\newcommand\QQ{\boldsymbol{\mathit{Q}}}
\newcommand\RR{\boldsymbol{\mathit{R}}}
\renewcommand\SS{\boldsymbol{\mathit{S}}}
\newcommand\TT{\boldsymbol{\mathit{T}}}
\newcommand\WW{\boldsymbol{\mathit{W}}}
\newcommand\XX{\boldsymbol{\mathit{X}}}
\newcommand\ZZ{\boldsymbol{\mathit{Z}}}
\newcommand\ZZtil{\widetilde{\boldsymbol{\mathit{Z}}}}
\newcommand\Otil{\widetilde{O}}
\newcommand{\zero}{\mathbf{0}}
\newcommand{\one}{\mathbf{1}}
\DeclareMathOperator*{\argmin}{arg\,min}
\DeclareMathOperator*{\argmax}{arg\,max}
\newcommand{\kh}[1]{\left(#1\right)}
\newcommand{\exactGreedy}{\textsc{ExactGreedy}}
\newcommand{\FastGreedy}{\textsc{ApproxGreedy}}
\newcommand{\SDDMSolver}{\textsc{Solve}}
\newcommand{\ERSumEst}{\textsc{ERSumsEst}}
\newcommand{\GainsEst}{\textsc{GainsEst}}
\newcommand{\lambdamin}{\lambda_{\mathrm{min}}}
\newcommand{\lambdamax}{\lambda_{\mathrm{max}}}
\newcommand{\wmax}{w_{\mathrm{max}}}
\newcommand{\wmin}{w_{\mathrm{min}}}
\newenvironment{algbox}[0]{\vskip 0.2in
\noindent
\begin{center}
\begin{fminipage}{0.85\textwidth}
}{
\end{fminipage}
\end{center}
\vskip 0.2in
}
\title{Current Flow Group Closeness Centrality for Complex Networks}
\author{
	Huan Li \\
	\normalsize School of Computer Science \\
	\normalsize Fudan University\\
	\normalsize \texttt{huanli16@fudan.edu.cn}
	\and
	Richard Peng \\
	\normalsize School of Computer Science \\
	\normalsize Georgia Institute of Technology\\
	\normalsize \texttt{rpeng@cc.gatech.edu}
	\and
	Liren Shan \\
	\normalsize School of Computer Science \\
	\normalsize Fudan University\\
	\normalsize \texttt{13307130150@fudan.edu.cn}
	\and
	Yuhao Yi \\
	\normalsize School of Computer Science \\
	\normalsize Fudan University\\
	\normalsize \texttt{yhyi15@fudan.edu.cn}
	\and
	Zhongzhi Zhang \\
	\normalsize School of Computer Science \\
	\normalsize Fudan University\\
	\normalsize \texttt{zhangzz@fudan.edu.cn}
}
\date{}
\begin{document}

{\singlespacing
\maketitle

\begin{abstract}
Current flow closeness centrality (CFCC) has a better discriminating ability than the ordinary closeness centrality based on shortest paths. In this paper, we extend this notion to a group of vertices in a weighted graph, and then study the problem of finding a subset $S$ of $k$ vertices to maximize its CFCC $C(S)$, both theoretically and experimentally. We show that the problem is NP-hard, but propose two greedy algorithms for minimizing the reciprocal of $C(S)$ with provable guarantees using the monotoncity and supermodularity. The first is a deterministic algorithm with an approximation factor $(1-\frac{k}{k-1}\cdot\frac{1}{e})$ and cubic running time; while the second is a randomized algorithm with a $(1-\frac{k}{k-1}\cdot\frac{1}{e}-\eps)$-approximation and nearly-linear running time for any $\epsilon > 0$.
Extensive experiments on model and real networks demonstrate that our algorithms are effective and efficient, with the second algorithm being scalable to massive networks with more than a million vertices.
\end{abstract}%

\section{Introduction}


A fundamental problem in network science and graph mining is to identify crucial vertices~\cite{LaMe12,LuChReZhZhZh16}. It is an important tool in network analysis and found numerous applications in various areas~\cite{Ne10}. The first step of finding central vertices is to define suitable indices measuring relative importance of vertices. Over the past decades, many centrality measures were introduced to characterize and analyze the roles of vertices in networks~\cite{WhSm03,BoVi14,BeKl15,BoDeRi16}. Among them, a popular one is closeness centrality~\cite{Ba48,Ba50}: the closeness of a vertex is the reciprocal of the sum of shortest path distances between it and all other vertices. However, this metric considers only the shortest paths, and more importantly, neglects contributions from other paths. Therefore it can produce some odd effects, or even counterintuitive results~\cite{BeWeLuMe16}. To avoid this shortcoming, Brandes and Fleischer presented current flow closeness centrality~\cite{BrFl05} based on electrical networks~\cite{DoSn84} , which takes into account contributions from all paths between vertices. Current flow based closeness has been shown to better discriminate vertices than its traditional counterparts~\cite{BeWeLuMe16}.

While most previous works focus on measures and algorithms for the importance of individual vertices in networks~\cite{WhSm03,LuChReZhZhZh16}, the problem of determining a group of $k$ most important vertices arises frequently in data mining and graph applications. For example, in social networks, retailers may want to choose $k$ vertices as promoters of product, such that the number of the potentially influenced customers is maximized~\cite{KeKlTa03}. Another example is P2P networks, where one wants to place resources on a fixed number of $k$ peers so they are easily accessed by others~\cite{GkMiSa06}. In order to measure the importance of a group of vertices, Everett and Borgatti~\cite{EvBo99} extended the idea of individual centrality to group centrality, and introduced the concepts of group centrality, for example, group closeness. Recently, some algorithms have been developed to compute or estimate group closeness~\cite{ZhLuToGu14,ChWaWa16, BeGoMe18}. However, similar to the case of individual vertices, these notions of group centrality also disregard contributions from paths that are not shortest.

In this paper, we extend current flow closeness of individual vertices~\cite{BeWeLuMe16} by proposing current flow closeness centrality (CFCC) for group of vertices. In a graph with $n$ vertices and $m$ edges, the CFCC $C(S)$ of a vertex group $S \subset V$ is equal to the ratio of $n$ to the sum of effective resistances between $S$ and all  vertices  $u$ in $V \setminus S$. 
We then consider the optimization problem: how can we find a group $S^*$ of $k$ vertices so as to maximize $C(S^*)$. We solve this problem by considering an equivalent problem of minimizing the reciprocal of $C(S^*)$. 
We show that the problem is NP-hard in Section~\ref{sec:nphard}, but also prove that the problem is an instance of supermodular set function optimization with cardinality constraint in Section~\ref{sec:supermodular}. The latter allows us to devise greedy algorithms to solve this problem, leading to two greedy algorithms with provable approximation guarantees:
\begin{enumerate}
\item A deterministic algorithm with a $(1-\frac{k}{k-1}\cdot\frac{1}{e})$  approximation factor and $O(n^3)$ running time (Section~\ref{sec:greedy});
\item  A randomized algorithm with a $(1-\frac{k}{k-1}\cdot\frac{1}{e}-\eps)$-approximation factor and $\Otil (km\eps^{-2})$ running time\footnote{We use the $\Otil (\cdot)$ notation to hide the ${\rm poly}\log $ factors.} for any small $\eps>0$
(Section~\ref{sec:algo}).
\end{enumerate}
A key ingredient of our second algorithm is nearly linear time solvers for Laplacians and symmetric, diagonally dominant, M-matrices (SDDM)~\cite{ST14,CKMPPRC14},
which has been used in various optimization problems on graphs~\cite{DaSp08,KeMiPe12,MiPe13}.

We perform extensive experiments on some networks to evaluate our algorithm, and some of their results are in Section~\ref{sec:experiments}. Our code is available on GitHub at \url{https://github.com/lchc/CFCC-maximization}. These results show that both algorithms are effective. Moreover, the second algorithm is efficient and is scalable to large networks with more than a million vertices.


\subsection{Related Works}

There exist various measures for centrality of a group of vertices, based on graph structure or dynamic processes, such as betweenness~\cite{DoelPuZi09, FiSp11,Yo14,MaTsUp16}, absorbing random-walk centrality~\cite{LiYuHuCh14,MaMagi15, ZhLiXiWuXuLu17}, and grounding centrality~\cite{PiSu14,ClHobuP017}. Since the criterion for importance of a vertex group is application dependent~\cite{GhTeLeYa14}, many previous works focus on selecting (or deleting) a group of $k$ vertices (for some given $k$) in order to optimize related quantities. These quantities are often measures of vertex group importance motivated by the applications, including minimizing the leading eigenvalue of adjacency matrix for vertex immunization~\cite{ToPrTsElFaCh10,ChToPrTsElFaCh16}, minimizing the mean steady-state variance for first-order leader-follower noisy consensus dynamics~\cite{PaBa10,ClPo11}, maximizing average distance for identifying structural hole spanners~\cite{ReLiXuLi5, XuReLiYuLi17}, and others.

Previous works on closeness centrality and related algorithms are most directly related to our focus on the group closeness centrality in this paper. The closeness centrality for an individual vertex was proposed~\cite{Ba48} and formalized~\cite{Ba50} by Bavelas. For a given vertex, its closeness centrality is defined as the reciprocal of the sum of shortest path distances of the vertex to all the other vertices. Everett and Borgatti~\cite{EvBo99} extended the individual closeness centrality to group closeness centrality, which measures how close a vertex group is to all other vertices. For a graph with $n$ vertices and $m$ edges, exactly computing the closeness centrality of a group of vertices involves calculating all-pairwise shortest path length, the time complexity of the state-of-the-art algorithm~\cite{Jo77} for which is $O(nm+n^2\log n)$.  To reduce the computation complexity, various approximation algorithms were developed.
A greedy $O(n^3)$ algorithm with approximation ratio $\left(1-\frac{k}{k-1}\cdot\frac{1}{e}\right)$ was devised~\cite{ChWaWa16}, and a sampling algorithm that scales better to large networks, but without approximation guarantee was also proposed in the same paper. Very recently, new techniques~\cite{BeGoMe18} have been developed to speed up the greedy algorithm in~\cite{ChWaWa16} while preserving its theoretical guarantees.

Conventional closeness centrality is based on the shortest paths, omitting the contributions from other paths. In order to overcome this drawback, Brandes and Fleischer introduced current flow closeness centrality for an individual vertex~\cite{BrFl05}, which essentially considers all paths between vertices, but still gives large weight to short paths.
Our investigation can be viewed as combining this line of current based centrality measures with the study of selecting groups of $k$ vertices.
For the former, a subset of the authors of this paper (Li and Zhang) recently demonstrated that current flow centrality measures for single edges can be computed provably efficiently~\cite{LZ18}.
Our approximation algorithm in Section~\ref{sec:algo} is directly motivated by that routine.

\section{Preliminaries}

In this section, we briefly introduce some  useful notations and tools
for the convenience of description of our problem and algorithms.

\subsection{Notations}

We use normal lowercase letters like $a,b,c$ to denote scalars in $\mathbb{R}$,
normal uppercase letters like $A, B, C$ to denote sets, bold lowercase letters like $\aa,\bb,\cc$ to denote vectors,
and bold uppercase letters like $\AA, \BB, \CC$ to denote matrices.
We write $\aa_{[i]}$ to denote the $i^{\mathrm{th}}$ entry of vector $\aa$
and $\AA_{[i,j]}$ to denote entry $(i,j)$ of matrix $\AA$.
We also write $\AA_{[i,:]}$ to denote the $i^{\mathrm{th}}$ row of $\AA$
and $\AA_{[:,j]}$ to denote the $j^{\mathrm{th}}$ column of $\AA$.

We write sets in matrix subscripts to denote submatrices. For example,
$\AA_{[I,J]}$ denotes the submatrix of $\AA$ with row indices in $I$
and column indices in $J$.
To simplify notation,
we also write $\AA_{-i}$ to denote the submatrix of $\AA$
obtained by removing the $i^{\mathrm{th}}$ row and $i^{\mathrm{th}}$ column of $\AA$.
For example, for an $n\times n$ matrix $\AA$,
$\AA_{-n}$ denotes the submatrix $\AA_{[1:n-1,1:n-1]}$.

Note that the precedence of matrix subscripts is
the lowest. Thus, $\AA_{-n}^{-1}$ denotes the inverse of $\AA_{-n}$
instead of a submatrix of $\AA^{-1}$.

For two matrices $\AA$ and $\BB$, we write $\AA \preceq \BB$ to denote
that $\BB - \AA$ is positive semidefinite, i.e.,
$\xx^T \AA \xx \leq \xx^T \BB \xx$ holds for every real vector $\xx$.

We use $\ee_i$ to denote the $i^{\mathrm{th}}$ standard basis vector of appropriate dimension,
and $\one_{S}$ to denote the indicator vector of $S$.

\subsection{Graphs, Laplacians, and Effective Resistances}

We write $G = (V,E,w)$ to denote a positively weighted undirected graph
with $n$ vertices, $m$ edges, and edge weight function $w : E \to \mathbb{R}^+$.
The Laplacian matrix $\LL$ of $G$ is defined as
$\LL_{[u,v]} = -w(u,v)$ if $u \sim v$,
$\LL_{[u,v]} = \mathrm{deg}(u)$ if $u = v$,
and $\LL_{[u,v]} = 0$ otherwise,
where $\mathrm{deg}(u) \defeq \sum\nolimits_{u\sim v} w(u,v)$
is the weighted degree of $u$ and $u\sim v$ means $(u,v) \in E$. Let $w_{\max}$ and  $w_{\min}$ denote, respectively, the maximum weight and minimum weight among all edges.
If we orient each edge of $G$ arbitrarily,
we can also write it's Laplacian as $\LL = \BB^T \WW \BB$,
where $\BB_{m\times n}$ is the signed edge-vertex incidence matrix defined by
$\BB_{[e,u]} = 1$ if $u$ is $e$'s head,
$\BB_{[e,u]} = -1$ if $u$ is $e$'s tail,
and $\BB_{[e,u]} = 0$ otherwise,
and $\WW_{m\times m}$ is a diagonal matrix with $\WW_{[e,e]} = w(e)$.
It is not hard to show that quadratic forms of $\LL$ can be written as
$
	\xx^T \LL \xx = \sum\nolimits_{u\sim v} w(u,v) \kh{\xx_{[u]} - \xx_{[v]}}^2,
$
which immediately implies that $\LL$ is positive semidefinite,
and $\LL$ only has one zero eigenvalue if $G$ is a connected graph.

The following fact shows that submatrices of Laplacians are always
positive definite and inverse-positive.

\begin{fact}\label{fact:inv+}
	Let $\LL$ be the Laplacian of a connected graph and
	let $\XX$ be a nonnegative, diagonal matrix with at least
	one nonzero entry. Then,
	$\LL + \XX$ is positive definite,
	and every entry of $\kh{\LL + \XX}^{-1}$ is positive.
\end{fact}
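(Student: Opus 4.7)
The plan is to handle the two claims separately. For positive definiteness, I would argue via the quadratic form. Recall that $\xx^T \LL \xx = \sum_{u \sim v} w(u,v) (\xx_{[u]} - \xx_{[v]})^2 \geq 0$ with equality if and only if $\xx$ is constant on every connected component; since $G$ is connected, this means $\xx$ is a multiple of $\one$. Also, $\xx^T \XX \xx = \sum_i \XX_{[i,i]} \xx_{[i]}^2 \geq 0$. Summing, for any nonzero $\xx$ either $\xx^T \LL \xx > 0$, or else $\xx = c\one$ for some $c \neq 0$, in which case the hypothesis that $\XX$ has at least one nonzero diagonal entry forces $\xx^T \XX \xx = c^2 \XX_{[i,i]} > 0$ at that coordinate. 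Hence $\xx^T (\LL + \XX) \xx > 0$ for all $\xx \neq \zero$, and $\LL + \XX$ is positive definite (and in particular invertible).

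For the entrywise positivity of $(\LL + \XX)^{-1}$, the plan is to use the standard M-matrix / Neumann series argument. Write $\LL + \XX = \DD - \AA$, where $\DD$ is the diagonal part (containing $\deg(u) + \XX_{[u,u]}$, which is strictly positive on every vertex since $G$ is connected and has no isolated vertices), and $\AA$ is the off-diagonal part of $-\LL$, i.e.\ the weighted adjacency matrix of $G$, which is entrywise nonnegative. Factor
\[
	\LL + \XX = \DD (\II - \PP), \qquad \PP \defeq \DD^{-1} \AA,
\]
so $\PP$ is entrywise nonnegative and has exactly the off-diagonal sparsity pattern of $G$'s adjacency matrix.

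Next I would verify $\rho(\PP) < 1$. The symmetric matrix $\DD^{-1/2} \AA \DD^{-1/2}$ is similar to $\PP$ and equals $\II - \DD^{-1/2} (\LL + \XX) \DD^{-1/2}$; since $\LL + \XX$ is positive definite, all eigenvalues of $\DD^{-1/2}(\LL + \XX)\DD^{-1/2}$ are positive, so the eigenvalues of the similar matrix $\PP$ all lie strictly below $1$. Combined with nonnegativity (Perron--Frobenius), this gives $\rho(\PP) < 1$, so the Neumann series converges:
\[
	(\LL + \XX)^{-1} = (\II - \PP)^{-1} \DD^{-1} = \Bigl(\sum_{k=0}^{\infty} \PP^k\Bigr) \DD^{-1}.
\]
Finally, to conclude every entry is strictly positive, observe that $\PP^0 = \II$ contributes positively on the diagonal, and for any two distinct vertices $u, v$ there is a walk from $u$ to $v$ in $G$ (by connectivity), say of length $\ell$; the corresponding term in $(\PP^\ell)_{[u,v]}$ is a product of positive quantities, hence strictly positive. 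Multiplying by $\DD^{-1}$, which is diagonal with strictly positive entries, preserves positivity of every entry.

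The main obstacle is really just the last connectivity-to-positivity step; all the other pieces are routine linear algebra. One subtle point to be careful about is ensuring $\DD$ has no zero entries, which follows from connectedness (so $\deg(u) > 0$ for all $u$); once that is in hand, the Neumann-series argument produces strict positivity cleanly.
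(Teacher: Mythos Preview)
Your proof is correct. The paper states this result as a \textbf{Fact} without proof, so there is no argument in the paper to compare against. Your approach---positive definiteness via the quadratic form (using that $\LL$ vanishes only on constants and $\XX$ is nonzero on at least one coordinate), and entrywise positivity of the inverse via the M-matrix/Neumann series route, with Perron--Frobenius supplying $\rho(\PP)<1$ from the fact that all eigenvalues of the symmetric similar matrix lie strictly below $1$---is the standard proof of this result and is complete as written. The only delicate step, that $\DD$ is invertible, you correctly handle via connectedness.
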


Let $0 = \lambda_1 < \lambda_2 \leq \ldots \leq\lambda_{n}$
be eigenvalues of $\LL$
of a connected graph $G$, and $\vv_1,\vv_2,\ldots,\vv_n$ be the corresponding
orthonormal eigenvectors. Then
we can  decompose $\LL$ as
$\LL = \sum\nolimits_{i=2}^n \lambda_i \vv_i \vv_i^T$
and define its pseudoinverse as
$\LL^\dag = \sum\nolimits_{i=2}^n \frac{1}{\lambda_i} \vv_i \vv_i^T$.

It is not hard to verify that if $\LL$ and $\HH$ are Laplacians of connected
graphs supported on the same vertex set,
then $\LL \preceq \HH$ implies
$\HH^\dag \preceq \LL^\dag$.

The pseudoinverse of  Laplacian matrix can be used to define effective resistance
between any pair of vertices~\cite{KlRa93}.

\begin{definition}
	For a connected graph $G = (V,E,w)$ with Laplacian matrix $\LL$,
	the effective resistance between vertices $u$ and $v$ is defined as
	$
		\er(u,v) = \kh{\ee_u - \ee_v}^T \LL^\dag \kh{\ee_u - \ee_v}\,.
	$
\end{definition}

The effective resistance between two vertices can also be expressed in term of the  diagonal elements
of the inverse for submatrices of $\LL$.

\begin{fact}[~\cite{IzKeWu13}]
	$
		\er(u,v) = \kh{\LL_{-u}^{-1}}_{[v,v]} = \kh{\LL_{-v}^{-1}}_{[u,u]}.
	$
\end{fact}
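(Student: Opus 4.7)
The plan is to unpack the definition $\er(u,v) = (\ee_u - \ee_v)^T \LL^\dag (\ee_u - \ee_v)$ and convert the pseudoinverse (acting on the space orthogonal to $\one$) into the inverse of a principal submatrix by ``grounding'' one endpoint. Concretely, set $\xx \defeq \LL^\dag (\ee_u - \ee_v)$. Since $\ee_u - \ee_v$ is orthogonal to $\one$ (which spans $\ker \LL$ for a connected graph), we have the potential equation $\LL \xx = \ee_u - \ee_v$, and by definition
\[
\er(u,v) = (\ee_u - \ee_v)^T \xx = \xx_{[u]} - \xx_{[v]}.
\]

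Next, I would exploit shift invariance. Because $\LL \one = \zero$, the vector $\yy \defeq \xx - \xx_{[v]} \one$ also satisfies $\LL \yy = \ee_u - \ee_v$, and additionally $\yy_{[v]} = 0$ and $\yy_{[u]} = \xx_{[u]} - \xx_{[v]} = \er(u,v)$. Now I would restrict the linear system $\LL \yy = \ee_u - \ee_v$ to the index set $V \setminus \{v\}$. Since $\yy_{[v]} = 0$, the column of $\LL$ corresponding to $v$ contributes nothing, and the restricted system is exactly
\[
\LL_{-v}\, \yy_{-v} = \ee_u,
\]
where on the right-hand side the $-1$ at coordinate $v$ has been dropped. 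By Fact~\ref{fact:inv+} applied to the connected graph $G$ with the diagonal matrix that zeroes out all coordinates except $v$, the submatrix $\LL_{-v}$ is positive definite and hence invertible, so $\yy_{-v} = \LL_{-v}^{-1} \ee_u$ and therefore
\[
\er(u,v) = \yy_{[u]} = \ee_u^T \LL_{-v}^{-1} \ee_u = \kh{\LL_{-v}^{-1}}_{[u,u]}.
\]
The second equality is obtained by repeating the same argument with the roles of $u$ and $v$ swapped (grounding $u$ instead of $v$).

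I do not expect any genuine obstacle: the only subtlety is making sure that the shift $\xx \mapsto \xx - \xx_{[v]} \one$ preserves both the Laplacian equation and the quantity $\xx_{[u]} - \xx_{[v]}$, and that the index-$v$ row/column can be dropped cleanly once $\yy_{[v]} = 0$. Invertibility of $\LL_{-v}$ is already handed to us by Fact~\ref{fact:inv+}, so there is nothing to verify there; symmetry of the statement follows immediately by relabeling.
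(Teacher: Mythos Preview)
The paper does not prove this statement; it is recorded as a Fact with a citation to~\cite{IzKeWu13} and no argument is given. Your proof is correct and is the standard electrical-network derivation: solve $\LL\xx = \ee_u - \ee_v$, shift so the $v$-coordinate vanishes, and restrict to $V\setminus\{v\}$ to read off $\er(u,v)$ as the $(u,u)$ entry of $\LL_{-v}^{-1}$.

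One minor point: your appeal to Fact~\ref{fact:inv+} for the invertibility of $\LL_{-v}$ is phrased a bit loosely. Fact~\ref{fact:inv+} concerns matrices of the form (Laplacian of a connected graph) $+$ (nonnegative diagonal with a nonzero entry), whereas $\LL_{-v}$ is a principal submatrix. The link is that $\LL_{-v}$ equals the Laplacian of $G[V\setminus\{v\}]$ plus the diagonal of edge weights crossing into $v$; since $G$ is connected, every component of $G[V\setminus\{v\}]$ has at least one such crossing edge, so Fact~\ref{fact:inv+} applies to each diagonal block. Alternatively, just note that $\xx^T \LL_{-v}\xx = \tilde\xx^T \LL \tilde\xx$ where $\tilde\xx$ is the zero-extension of $\xx$, which is positive unless $\tilde\xx$ is constant and hence zero. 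Either way the patch is routine and your argument goes through.
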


\subsection{Current Flow Closeness Centrality}

The current flow closeness centrality was proposed in~\cite{BrFl05}.
It is based on the assumption that information spreads efficiently
like an electrical current.

To define current flow closeness, we treat the graph $G$
as a resistor network via replacing
every edge $e$ by a resistor with resistance $r_e = 1 / w(e)$.
Let $v_{st}(u)$ denote the voltage of $u$ when
a unit current enters the network at $s$
and leaves it at $t$.

\begin{definition}
	\label{def:ccsingle}
	The current flow closeness $C(u)$ of a vertex $u$ is defined as
	$
		C(u) = n / \sum\nolimits_{v\in V}
			\kh{v_{uv}(u) - v_{uv}(v)}.
	$
\end{definition}

It has been proved~\cite{BrFl05} that the current flow closeness of vertex $u$
equals the ratio of $n$ to the
sum of effective resistances between $u$ and other vertices.

\begin{fact}\label{fact:cer}
	$C(u) = n / \sum\nolimits_{v\in V} \er(u,v)$.
\end{fact}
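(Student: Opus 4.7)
The plan is to reduce the identity to the standard electrical interpretation of effective resistance: when one unit of current is injected at $s$ and withdrawn at $t$, the voltage difference between $s$ and $t$ equals $R_{\mathrm{eff}}(s,t)$. Once this is established, the fact follows by substituting $t=v$ and $s=u$ into Definition~\ref{def:ccsingle}.

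First I would express the voltage vector induced by the source-sink pair $(u,v)$ algebraically. Treating the graph as a resistor network with conductances $w(e)$, Kirchhoff's and Ohm's laws together say that if $\ii$ is the external current vector (here $\ii = \ee_u - \ee_v$) and $\vv_{uv}\in \mathbb{R}^n$ is the vector of vertex potentials, then $\LL \vv_{uv} = \ii$. Since $G$ is connected, $\LL$ has a one-dimensional kernel spanned by $\one$ and $\ii \perp \one$, so this system is solvable and the solution is unique up to an additive constant. Picking the representative $\vv_{uv} = \LL^{\dag}(\ee_u - \ee_v)$, we get
\[
    v_{uv}(u) - v_{uv}(v) = (\ee_u - \ee_v)^T \vv_{uv} = (\ee_u - \ee_v)^T \LL^{\dag} (\ee_u - \ee_v) = \er(u,v),
\]
where the last equality is the definition of effective resistance. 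Note that the quantity $v_{uv}(u) - v_{uv}(v)$ is invariant under the additive constant, so it does not matter which representative of the solution we chose.

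Substituting this identity into Definition~\ref{def:ccsingle} gives
\[
    C(u) \;=\; \frac{n}{\sum_{v \in V}\bigl(v_{uv}(u) - v_{uv}(v)\bigr)} \;=\; \frac{n}{\sum_{v\in V} \er(u,v)},
\]
noting that the $v=u$ term contributes $0$ both on the electrical side (no current flows when source equals sink) and on the resistance side ($\er(u,u)=0$), so it can be kept or dropped without affecting the sum.

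The only real content is the first step, i.e., justifying that the potentials induced by the unit $u$-to-$v$ current are given by $\LL^{\dag}(\ee_u-\ee_v)$. This is standard but is the only place where any thinking is required; everything else is substitution. I do not expect any obstacle beyond being careful that $\LL$ is only invertible on the space orthogonal to $\one$, which is handled automatically because the external current vector $\ee_u - \ee_v$ lies in that subspace.
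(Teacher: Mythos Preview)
Your proposal is correct and is exactly the standard argument. The paper itself does not give a proof of this fact; it simply attributes it to~\cite{BrFl05}, and your derivation via $\LL\vv_{uv}=\ee_u-\ee_v$, $\vv_{uv}=\LL^\dag(\ee_u-\ee_v)$, and the definition of $\er(u,v)$ is precisely the route that reference takes. The paper later carries out the analogous computation for the group case in Section~\ref{sec:groupcfc} (using the submatrix $\LL_{-S}$ rather than $\LL^\dag$), so your argument is fully in line with the paper's overall treatment.
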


Actually, current flow closeness centrality is equivalent to information centrality~\cite{SZ89}.

\subsection{Supermodular Functions}

We now give the definitions for monotone and
supermodular set functions.
For simplicity, we write $S + u$ to denote $S\union\setof{u}$
and $S - u$ to denote $S\setminus \setof{u}$.

\begin{definition}[Monotonicity]
A set function $f : 2^V \to \mathbb{R}$ is
monotone if
$
	f(S)\geq f(T)
$
holds for all $S\subseteq T$.
\end{definition}

\begin{definition}[Supermodularity]
A set function $f : 2^V \to \mathbb{R}$ is
supermodular if
$
	f(S) - f(S + u) \geq f(T) - f(T + u)
$
holds for
all $S\subseteq T \subseteq V$ and $u\in V$.
\end{definition}

\section{Current Flow Closeness of a Group of  Vertices}\label{sec:groupcfc}

We follow the idea of~\cite{BrFl05} to define current flow closeness centrality (CFCC)
of a group of vertices.

To define current flow closeness centrality for a vertex set $S\subseteq V$,
we treat the graph $G$ as a resistor network in which
all vertices in $S$ are grounded. Thus, vertices in $S$
always have voltage $0$.
For a vertex $u\notin S$,
let $v_{uS}(v)$ be the voltage of $v$
when
a unit current enters the network at $u$
and leaves it at $S$ (i.e. the ground).
Then, we define the current flow closeness of $S$ as follows.

\begin{definition}
\label{CFCC} Let $G = (V,E,w)$ be a connected weighted graph.
 The current flow closeness centrality $C(S)$
of a vertex group $S\subseteq V$ is defined as
$
	C(S) = {n}/
		{\sum\nolimits_{u\in V} \kh{v_{uS}(u) - v_{uS}(S)}}
	= {n}/
		{\sum\nolimits_{u\in V} v_{uS}(u)}.
$
\end{definition}
Note that there are different variants of the definition of  CFCC for a vertex group. For example, we can use $ \kh{n-\mid S \mid} / \kh{\sum\nolimits_{u\in V} v_{uS}(u)}$ as the measure of CFCC for a vertex set $S$. Definition~\ref{CFCC} adopts the standard form as the classic closeness centrality~\cite{ChWaWa16}.

We next show that  $C(S)$ is in fact equal to the ratio of $n$ to a sum of effective resistances as in Fact~\ref{fact:cer}.

Let $u\in V$ be a fixed vertex.
Suppose there is a unit current enters the network at $u$
and leaves it at $S$.
Let $\vv \in \mathbb{R}^n$
be a vector of voltages at vertices.
By Kirchhoff's Current Law and Ohm's Law, we have
$
	\LL \vv = \ee_u - \ii_{S},
$
where $\ii_{S}$ denotes the amount of current flowing out of $S$.
Since vertices in $S$ all have voltage $0$,
we can restrict this equation to vertices in $V - S$ as
$
	\LL_{[V-S,V-S]} \vv_{[V-S]} = \ee_u 
$,
which leads to
$
	\vv_{[V-S]} = \LL_{[V-S,V-S]}^{-1} \ee_u.
$
This gives the expression of voltage at $u$ as
$
	v_{uS}(u) = \ee_u^T \vv = \ee_u^T \LL_{-S}^{-1} \ee_u.
$
Now we can write the  CFCC of $S$ as
\begin{align*}
	C(S) = \frac{n}
		{\sum\nolimits_{u\in V} v_{uS}(u)} =
	\frac{n}
		{\sum\nolimits_{u\in V - S} \ee_u^T \LL_{-S}^{-1} \ee_u}
	= \frac{n}{\trace{\LL_{-S}^{-1}}}.
\end{align*}

Note that the diagonal entry $\kh{\LL_{-S}^{-1}}_{[u,u]}$ of  $\kh{\LL_{-S}^{-1}}$  is exactly the effective resistance $\er(u,S)$ between vertex $u$ and vertex set $S$~\cite{ClPo11}, with $\er(u,S)=0$ for  any $u\in S$.
Then we have the following relation governing $C(S)$ and $\er(u,S)$.
\begin{fact}
	$C(S) = n / \trace{\LL_{-S}^{-1}}
	 = n / \sum\nolimits_{u\in V} \er(u,S)$.
\end{fact}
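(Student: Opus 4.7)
The plan is to combine Kirchhoff's Current Law on the grounded network with a closed-form inversion of the reduced Laplacian, essentially formalizing what is already sketched in the paragraphs preceding the Fact. Starting from Definition~\ref{CFCC}, it suffices to prove $\sum_{u \in V} v_{uS}(u) = \trace{\LL_{-S}^{-1}}$, after which the second equality follows by identifying each diagonal entry of $\LL_{-S}^{-1}$ with an effective resistance to $S$.

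First I would fix a source $u \in V - S$ and write the equations for the electrical flow that injects one unit of current at $u$ and extracts it at the grounded set $S$. Letting $\vv$ denote the vector of voltages, we have $\LL \vv = \ee_u - \ii_{S}$, where $\ii_S$ is the (unknown) vector of outflows supported on $S$. Since grounding forces $\vv_{[S]} = \zero$, restricting to the rows indexed by $V - S$ gives $\LL_{[V-S, V-S]} \vv_{[V-S]} = \ee_u$, because the entries of $\ii_S$ on $V - S$ vanish and the columns of $\LL$ indexed by $S$ are multiplied by $\vv_{[S]} = \zero$. By Fact~\ref{fact:inv+}, $\LL_{-S}$ is positive definite and hence invertible, so $\vv_{[V-S]} = \LL_{-S}^{-1} \ee_u$ and $v_{uS}(u) = \ee_u^T \LL_{-S}^{-1} \ee_u = (\LL_{-S}^{-1})_{[u,u]}$. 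For $u \in S$ the grounding convention gives $v_{uS}(u) = 0$, consistent with $\er(u, S) = 0$.

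Summing over $u \in V$ then yields $\sum_{u \in V} v_{uS}(u) = \sum_{u \in V - S} (\LL_{-S}^{-1})_{[u,u]} = \trace{\LL_{-S}^{-1}}$, which is the first equality. For the second equality, the diagonal entry $(\LL_{-S}^{-1})_{[u,u]}$ for $u \notin S$ is precisely $\er(u, S)$, as noted in the text and analogous in form to the pair-case identity $\er(u,v) = (\LL_{-u}^{-1})_{[v,v]}$. I expect the only real subtlety to be justifying the reduction from the full KCL system to its restriction on $V - S$: one must verify that dropping the $S$-rows loses no information about voltages outside $S$. This reduces to the two observations that $\ii_S$ has no support on $V - S$, and that the $S$-columns of $\LL$ contribute nothing once $\vv_{[S]} = \zero$, after which the restricted system is square, invertible by Fact~\ref{fact:inv+}, and fully determines $\vv_{[V-S]}$.
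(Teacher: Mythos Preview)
Your proposal is correct and follows essentially the same route as the paper: the derivation in the paragraphs preceding the Fact already reduces $\LL\vv=\ee_u-\ii_S$ to $\LL_{-S}\vv_{[V-S]}=\ee_u$, solves to get $v_{uS}(u)=\ee_u^T\LL_{-S}^{-1}\ee_u$, sums to obtain $\trace{\LL_{-S}^{-1}}$, and then identifies the diagonal entries with $\er(u,S)$ via~\cite{ClPo11}. Your write-up is slightly more explicit than the paper's in justifying why the restriction to $V-S$ is lossless (support of $\ii_S$, vanishing of the $S$-columns against $\vv_{[S]}=\zero$, invertibility from Fact~\ref{fact:inv+}), but the argument is the same.
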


Being able to define CFCC of a vertex set
raises the problem of maximizing
current flow closeness subject to a cardinality constraint,
which we state below.

\begin{problem}[\underline{C}urrent \underline{F}low \underline{C}loseness \underline{M}aximization, CFCM]
	\label{prob:ccm}
	Given a connected graph $G = (V,E,w)$ with $n$ vertices, $m$ edges,
	and edge weight function $w : E \to \mathbb{R}^+$
	and an integer $1\leq k\leq n$, find a vertex group $S^* \in V$ such that the CFCC $C(S^*)$ is maximized, that is
		$
			S^* \in \underset{ {S\subseteq V,\sizeof{S} = k} }
				{\operatorname{arg \,max}} \quad
				C(S) \,.
	$
\end{problem}

\section{Hardness of Current Flow Closeness
	Maximization}\label{sec:nphard}
	
In this section, we prove that Problem~\ref{prob:ccm} is NP-hard.
We will give a reduction from vertex cover on 3-regular graphs
(graphs whose vertices all have degree 3),
which is an NP-complete problem~\cite{FHJ98}.
The decision version of this problem is stated below.
\begin{problem}[\underline{V}ertex \underline{C}over on \underline{3}-regular graphs,  VC3]
	Given a connected 3-regular graph $G = (V,E)$ and an integer $k$,
	decide whether or not there is a vertex set $S\subset V$ such
	that $\sizeof{S} \leq k$ and $S$ is a vertex cover of $G$
	(i.e. every edge in $E$ is incident with
	at least one vertex in $S$).
\end{problem}

An instance of this problem is denoted by VC3$(G,k)$.

We then give the decision version of Problem~\ref{prob:ccm}.

\begin{problem}[\underline{C}urrent \underline{F}low \underline{C}loseness \underline{M}aximization,
	\underline{D}ecision Version, CFCMD]
	Given a connected graph $G = (V,E,w)$, an integer $k$, and a real number $r \in \mathbb{R}$,
	decide whether or not there is a vertex set $S\subset V$ such
	that $\sizeof{S} \leq k$ and $C(S) \geq r$.
\end{problem}

An instance of this problem is denoted by CFCMD$(G,k,r)$.

To give the reduction,
we will need the following lemma.
	
\begin{lemma}\label{lem:vc3}
	Let $G = (V,E,w)$ be a connected 3-regular graph with all edge weights being $1$
	(i.e. $w(e) = 1$ for all $e\in E$).
	Let $S\subset V$ be a nonempty vertex set, and $k = \sizeof{S}$.
	Then, $C(S) \leq 3n / (n - k)$ and
	the equality holds if and only if $S$ is a vertex cover of $G$.
\end{lemma}

\begin{proof}
	We first show that if $S$ is a vertex cover of $G$ then $C(S) = 3n / (n - k)$.
	When $S$ is a vertex cover, $V\setminus S$ is an independent set.
	Thus, $\LL_{-S}$ is a diagonal matrix with all diagonal entries being $3$.
	So we have
	$
		C(S) = n / \trace{\LL_{-S}^{-1}}
		= n / \trace{\kh{\mathrm{diag}(3,\ldots,3)}^{-1}} = 3n / (n - k).
	$
	
	We then show that if $S$ is not a vertex cover of $G$ then $C(S) < 3n / (n - k)$.
	When $S$ is not a vertex cover, $V\setminus S$ is not an independent set.
	Thus, $\LL_{-S}$ is a block diagonal matrix,
	with each block corresponding to a connected component of $G[V\setminus S]$,
	the induced graph of $G$ on $V\setminus S$.
	Let $T\subseteq V\setminus S$ be a connected component of $G[V\setminus S]$
	such that $\sizeof{T} > 1$.
	Then, the block of $\LL_{-S}$ corresponding to $T$ is $\LL_{[T,T]}$.
	For a vertex $u\in T$,
	let the $u^{\textrm{th}}$ column of $\LL_{[T,T]}$ be
	$\begin{pmatrix} 3 \\ -\aa \end{pmatrix}$.
	Then, we can write $\LL_{[T,T]}$ into block form as
	$
		\LL_{[T,T]} = \begin{pmatrix}
			3 & -\aa^T \\
			-\aa & \AA
		\end{pmatrix},
	$
	where $\AA \defeq \LL_{[T - u,T - u]}$.
	By blockwise matrix inversion we have
	$
		\kh{\LL_{[T,T]}^{-1}}_{[u,u]} =
		1 / \kh{3 - \aa^T \AA^{-1}\aa}.
	$
	Since $\LL_{[T,T]}^{-1}$ is positive definite, we have
	$1 / (3 - \aa^T \AA^{-1}\aa) > 0$ and hence $\aa^T \AA^{-1}\aa < 3$.
	Since $T$ is a connected component, $\aa$ is not a zero vector,
	which coupled with the fact that $\AA^{-1}$ is positive definite
	gives $\aa^T \AA^{-1}\aa > 0$. Thus,
	$1 / (3 - \aa^T \AA^{-1}\aa) > 1 / 3$.
	Since this holds for all $u\in T$, we have $\trace{\LL_{[T,T]}^{-1}} > \sizeof{T}/3$.
	Also, since $T$ can be any connected component of $G[V\setminus S]$ with at least two vertices,
	and a block of an isolate vertex in $G[V\setminus S]$ contributes a $1/3$ to $\trace{\kh{\LL_{-S}}^{-1}}$,
	we have
	for any $S$ which is not a vertex cover of $G$
	$
		\trace{\LL_{-S}^{-1}} > \sizeof{V\setminus S} / 3 = (n - k) / 3,
	$
	which implies $C(S) = n / \trace{\LL_{-S}^{-1}} < 3n / (n - k)$.
\end{proof}

The following theorem then follows by Lemma~\ref{lem:vc3}.


\begin{theorem}
Maximizing current flow closeness subject to a cardinality constraint is NP-hard.
\end{theorem}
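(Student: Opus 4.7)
The plan is to give a polynomial-time many-one reduction from VC3, which is NP-complete~\cite{FHJ98}, to the decision problem CFCMD, using Lemma~\ref{lem:vc3} as the key tool. Since CFCMD is the decision version of Problem~\ref{prob:ccm}, this establishes that CFCM is NP-hard.

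Given an instance VC3$(G,k)$, after ruling out the trivial cases $k\geq n$ and $k=0$ in polynomial time, I may assume $1\leq k<n$. I would then output the CFCMD instance $(G,k,r)$, where $G$ is viewed as a unit-weighted graph and $r \defeq 3n/(n-k)$; this construction is clearly polynomial in the input size. For the forward direction, given any vertex cover of $G$ of size at most $k$, I would pad it arbitrarily with extra vertices (possible since $k<n$) to obtain a vertex cover $S$ of size exactly $k$, at which point the ``if'' part of Lemma~\ref{lem:vc3} delivers $C(S)=3n/(n-k)\geq r$, so CFCMD is YES.

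For the reverse direction, suppose $S\subseteq V$ satisfies $k' \defeq \sizeof{S}\leq k$ and $C(S)\geq r$. Since $j\mapsto 3n/(n-j)$ is strictly increasing on $[0,n)$, we have $3n/(n-k')\leq 3n/(n-k) = r$, while Lemma~\ref{lem:vc3} yields $C(S)\leq 3n/(n-k')$; chaining these inequalities forces $k'=k$ and $C(S)=3n/(n-k)$, whence the equality clause of Lemma~\ref{lem:vc3} certifies that $S$ is a vertex cover of $G$, so VC3 is YES. The only substantive step, Lemma~\ref{lem:vc3}, is already in hand; the remaining hurdle is purely bookkeeping, namely reconciling the ``$\leq k$'' constraint of the two problems (handled by padding together with the monotonicity argument above) and side-stepping the degenerate case $k=n$ in which the threshold $r$ is undefined.
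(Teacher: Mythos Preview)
Your proposal is correct and takes essentially the same approach as the paper: both reduce VC3 to CFCMD via the map $(G,k)\mapsto (G,k,3n/(n-k))$ with unit edge weights, invoking Lemma~\ref{lem:vc3} as the sole substantive ingredient. You are simply more explicit than the paper about the bookkeeping (padding small covers up to size $k$, handling $k\ge n$ and $k=0$, and the monotonicity argument forcing $k'=k$ in the reverse direction), all of which the paper's proof leaves implicit.
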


\begin{proof}
	We give a polynomial reduction
	$
		p : \setof{(G = (V,E), k)} \to \setof{(G = (V,E,w), k, r)}
	$
	from instances of VC3 to instances of CFCMD.
	For a connected 3-regular graph $G = (V,E)$ with $n$ vertices,
	we construct a weighted graph $G' = (V,E,w_1)$
	with the same vertex set and edge set and
	an edge weight function $w_1 : E \to \setof{1}$ mapping all edges to weight $1$.
	Then, we construct a reduction $p$
	as
	\begin{align*}
		p\kh{(G = (V,E), k)} = \kh{G' = (V,E,w_1), k, 3n / (n - k)}.
	\end{align*}
	By Lemma~\ref{lem:vc3}, $p$ is a polynomial reduction from VC3 to CFCMD,
	which implies that CFCM is NP-hard.
\end{proof}

\section{Supermodularity of the Reciprocal of Current Flow Group Closeness}
\label{sec:supermodular}

In this section, we prove that the
reciprocal of current flow group closeness,
i.e., $\trace{\LL_{-S}^{-1}} / n$, is a monotone supermodular function.
Our proof uses the following lemma,
which shows that $\LL_{-S}^{-1}$ is entrywise supermodular.

\begin{lemma}\label{lem:ewsup}
	Let $u,v \in V$ be an arbitrary pair of vertices.
	Then,
	the entry $\kh{\LL_{-S}^{-1}}_{[u,v]}$
	is a monotone supermodular function.
	Namely, for
	vertices $u,v \neq w$
	and nonempty vertex sets $S\subseteq T \subseteq V$
	such that $u,v,w \notin T$,
	$\kh{\LL_{-S}^{-1}}_{[u,v]} \geq \kh{\LL_{-T}^{-1}}_{[u,v]}$
	and
	\[
		\kh{\LL_{-S}^{-1}}_{[u,v]} - \kh{\LL_{-(S+w)}^{-1}}_{[u,v]}
		\geq \kh{\LL_{-T}^{-1}}_{[u,v]} - \kh{\LL_{-(T+w)}^{-1}}_{[u,v]}.
	\]
\end{lemma}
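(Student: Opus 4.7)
The plan is to reduce everything to the Schur-complement identity obtained by block-inverting $\LL_{-S}$ with the $w^{\mathrm{th}}$ row and column split off: for $u,v,w \notin S$,
$$
	\kh{\LL_{-(S+w)}^{-1}}_{[u,v]}
	= \kh{\LL_{-S}^{-1}}_{[u,v]}
	 - \frac{\kh{\LL_{-S}^{-1}}_{[u,w]} \kh{\LL_{-S}^{-1}}_{[w,v]}}
	        {\kh{\LL_{-S}^{-1}}_{[w,w]}} .
$$
Since $\LL_{-S}$ is a nonsingular symmetric M-matrix (a principal submatrix of the Laplacian of a connected graph with at least one row removed), the same reasoning as Fact~\ref{fact:inv+} shows that every entry of $\LL_{-S}^{-1}$ is strictly positive. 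Monotonicity of $\kh{\LL_{-S}^{-1}}_{[u,v]}$ then follows at once: the subtracted term is positive.

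For supermodularity I would induct on $|T \setminus S|$, reducing to the case $T = S + z$ for some $z \notin S \cup \setof{u,v,w}$. Applying the Schur identity to rewrite the entries of $\LL_{-(S+z)}^{-1}$ in terms of entries of $\LL_{-S}^{-1}$, and writing $a,b,c$ for $\kh{\LL_{-S}^{-1}}_{[u,w]}, \kh{\LL_{-S}^{-1}}_{[v,w]}, \kh{\LL_{-S}^{-1}}_{[w,w]}$ and $d,e,f,g$ for $\kh{\LL_{-S}^{-1}}_{[u,z]}, \kh{\LL_{-S}^{-1}}_{[v,z]}, \kh{\LL_{-S}^{-1}}_{[z,w]}, \kh{\LL_{-S}^{-1}}_{[z,z]}$, the supermodular inequality takes the form
$$
	\frac{ab}{c} \;\geq\; \frac{\kh{a - df/g}\kh{b - ef/g}}{c - f^2/g} .
$$
Clearing the (positive) denominators and collecting terms, this reduces to the single polynomial inequality $cg(ae + bd) \geq f(abg + cde)$.

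The main obstacle is this polynomial inequality. I plan to dispose of it by splitting into the two term-wise pieces $cg \cdot ae \geq f \cdot abg$ and $cg \cdot bd \geq f \cdot cde$, each of which---after cancelling a common positive factor---takes the form of an \emph{inverse M-matrix inequality} $\kh{\LL_{-S}^{-1}}_{[i,i]} \kh{\LL_{-S}^{-1}}_{[j,k]} \geq \kh{\LL_{-S}^{-1}}_{[i,j]} \kh{\LL_{-S}^{-1}}_{[i,k]}$ with distinct $i,j,k$ (applied with $(i,j,k) = (w,v,z)$ and $(z,v,w)$ respectively). This inverse M-matrix inequality is itself just a restatement of the fact that $\kh{\LL_{-(S+i)}^{-1}}_{[j,k]} \geq 0$, as read off from the same Schur identity. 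So the entire argument bootstraps from the Schur identity together with positivity of the entries of $\LL_{-S}^{-1}$. A pleasing conceptual parallel comes from the electrical interpretation $\kh{\LL_{-S}^{-1}}_{[u,w]} \kh{\LL_{-S}^{-1}}_{[v,w]} / \kh{\LL_{-S}^{-1}}_{[w,w]} = \er(w,S) \cdot h_u^S(w) \cdot h_v^S(w)$, a product of three manifestly monotone-decreasing quantities in $S$, which is the same statement of supermodularity in slightly different clothing.
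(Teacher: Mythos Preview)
Your argument is correct and takes a genuinely different route from the paper's proof. The paper introduces a continuous relaxation $\HH(\xx)$ of $\LL_{-S}$ parametrized by $\xx\in[0,1]^n$, writes the marginal gain $\HH^{-1}(\one_S)-\HH^{-1}(\one_{S+w})$ as an integral of $\HH^{-1}(\cdot)(-t\HH(\one_V-\ee_w))\HH^{-1}(\cdot)$, and then proves a separate lemma that the entries of $\HH^{-1}(\one_S+t\ee_w)$ are monotone in $S$ via a block-inversion argument; entrywise monotonicity of the integrand then yields supermodularity. Your proof bypasses the relaxation and the integral entirely: the single Schur identity $\kh{\LL_{-(S+w)}^{-1}}_{[u,v]}=\kh{\LL_{-S}^{-1}}_{[u,v]}-\kh{\LL_{-S}^{-1}}_{[u,w]}\kh{\LL_{-S}^{-1}}_{[w,v]}/\kh{\LL_{-S}^{-1}}_{[w,w]}$ gives monotonicity immediately and reduces supermodularity (after the legitimate reduction to $T=S+z$) to the polynomial inequality $cg(ae+bd)\ge f(abg+cde)$. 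Your term-wise splitting $ag(ce-fb)\ge 0$ and $cd(gb-fe)\ge 0$ is valid because $ce\ge fb$ and $gb\ge fe$ are exactly the statements $\kh{\LL_{-(S+w)}^{-1}}_{[v,z]}\ge 0$ and $\kh{\LL_{-(S+z)}^{-1}}_{[v,w]}\ge 0$, which hold since those matrices are again inverses of SDDM matrices. One minor point: entries of $\LL_{-S}^{-1}$ are in general only \emph{nonnegative}, not strictly positive (they vanish when $G[V\setminus S]$ is disconnected and the indices lie in different components), but your inequalities only need nonnegativity together with strict positivity of the diagonal entries $c,g$, so nothing breaks. Overall your proof is shorter and more self-contained; the paper's analytic framework may generalize more readily, but for this lemma your discrete argument is the cleaner one.
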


\ifx\aaa\undefined
To prove Lemma~\ref{lem:ewsup},
we first define a linear relaxation
$\HH : [0,1]^n \to \mathbb{R}^{n\times n}$
of
$\LL_{-S}$
as
\begin{align}\label{eq:defh}
	\kh{\HH(\xx)}_{[u,v]} =
	\begin{cases}
		\LL_{[u,u]} & \text{if $u = v$,} \\
		\kh{1 - \xx_{[u]}}\cdot \kh{1 - \xx_{[v]}}\cdot \LL_{[u,v]}
		& \text{if $u\neq v$.}
	\end{cases}
\end{align}
\begin{remark}
	We remark the intuition behind this relaxation $\HH(\xx)$.
	Let $\xx_{[=1]}$ denote the indices
	of entries of $\xx$ equal to one,
	and let
	$\xx_{[<1]}$ denote the indices of entries of $\xx$
	less than one.
	Then, by the definition in~(\ref{eq:defh}),
	we can write $\HH(\xx)$ into a block diagonal matrix as
	\begin{align*}
		\HH(\xx) =
		\begin{pmatrix}
			\kh{\HH(\xx)}_{[\xx_{[=1]},\xx_{[=1]}]} & \zero \\
			\zero &
			\kh{\HH(\xx)}_{[\xx_{[<1]},\xx_{[<1]}]}
		\end{pmatrix},
	\end{align*}
	where
	$\kh{\HH(\xx)}_{[\xx_{[=1]},\xx_{[=1]}]}$
	is itself a diagonal matrix.
	This means that
	if $\xx = \one_{S}$ for some
	nonempty vertex set $S\subseteq V$,
	the following statement holds:
	\[
		\kh{\HH^{-1}(\one_{S})}_{-S}
		=
		\LL_{-S}^{-1}.
	\]
	The condition that every entry of $\xx$ is in $[0,1]$
	coupled with Fact~\ref{fact:inv+}
	also implies that
	all submatrices of $\HH(\xx)$ are
	positive definite and inverse-positive.
\end{remark}
Now for
vertices $u,v \neq w$
and nonempty vertex set $S\subseteq V$
such that $u,v,w \notin S$,
we can write the marginal gain of a vertex $w$ as
\begin{align}\label{eq:marginal}
	\kh{\LL_{-S}^{-1} -
	\LL_{-(S+w)}^{-1}}_{[u,v]}
	=
	\kh{\HH^{-1}(\one_{S}) - \HH^{-1}(\one_{S+w})}_{[u,v]}.
\end{align}
We can further write the matrix on
the rhs of~(\ref{eq:marginal}) as an integral by
\begin{align}
	& \HH^{-1}(\one_{S}) - \HH^{-1}(\one_{S+w}) =
	- \left. \HH^{-1}(\one_{S} + t \cdot \ee_{w}) \right|_{0}^{1}
	\notag \\
	=
	& -\int_{0}^{1}
	\frac{\mathrm{d} \HH^{-1}(\one_{S} + t \cdot \ee_{w})}
	{\mathrm{d} t}
	\mathrm{d} t \notag \\
	=
	&
	\int_{0}^{1}
	\HH^{-1}(\one_{S} + t \cdot \ee_{w})
	\frac{\mathrm{d} \HH(\one_{S} + t \cdot \ee_{w})}
	{\mathrm{d} t}
	\HH^{-1}(\one_{S} + t \cdot \ee_{w}) \mathrm{d} t \notag \\
	=
	& \int_{0}^{1}
	\HH^{-1}(\one_{S} + t \cdot \ee_{w})
	\kh{ -t\HH(\one_V - \ee_{w}) }
	\HH^{-1}(\one_{S} + t \cdot \ee_{w}) \label{eq:int}
	\mathrm{d} t,
\end{align}
where the second equality follows by the identity
\[
	\frac{\mathrm{d} \AA^{-1}}{\mathrm{d} x}
	=
	- \AA^{-1}
	\frac{\mathrm{d} \AA}{\mathrm{d} x}
	\AA^{-1}
\]
for any invertible matrix $\AA$.

To prove Lemma~\ref{lem:ewsup},
we will also need the following lemma, which
shows the entrywise monoticity of $\HH^{-1}(\xx)$.

\begin{lemma}\label{lem:monoh}
	For $0\leq t\leq 1$,
	the following statement
	holds for any vertices $u,v\neq w$ and
	nonempty vertex sets $S\subseteq T\subseteq V$ such that
	$u,v,w\notin T$:
	\begin{align*}
		\kh{\HH^{-1}(\one_{S} + t \cdot \ee_{w})}_{[u,v]}
		\geq
		\kh{\HH^{-1}(\one_{T} + t \cdot \ee_{w})}_{[u,v]}.
	\end{align*}
\end{lemma}

\begin{proof}
For simplicity,  we let
	$\SS \defeq \HH(\one_{S} + t \cdot \ee_{w})$ and
	$\TT \defeq \HH(\one_{T} + t \cdot \ee_{w})$.
	We also write $P \defeq V\setminus T$,
	$Q \defeq V\setminus S$,
	and $F \defeq T\setminus S$.
	Due to the block diagonal structures of $\SS$ and $\TT$,
	we have
	\begin{align}\label{eq:diags}
		\kh{\SS^{-1}}_{[u,v]} =
		\kh{\SS_{[Q,Q]}^{-1}}_{[u,v]}
	\end{align}
	and
	\begin{align}\label{eq:diagt}
		\kh{\TT^{-1}}_{[u,v]} =
		\kh{\TT_{[P,P]}^{-1}}_{[u,v]}.
	\end{align}
	Since $\SS$ and $\TT$ agree
	on entries with indices in $P$,
	we can write the submatrix $\SS_{[Q,Q]}$ of $\SS$
	in block form as
	\begin{align*}
		\SS_{[Q,Q]} =
		\begin{pmatrix}
			\TT_{[P,P]} & \SS_{[P,F]} \\
			\SS_{[F,P]} &
			\SS_{[F,F]}
		\end{pmatrix}.
	\end{align*}
	By blockwise matrix inversion, we have
	\begin{align*}
		& \kh{\SS_{[Q,Q]}^{-1}}_{[P,P]} \\
		= & \TT_{[P,P]}^{-1} +
		\TT_{[P,P]}^{-1}
		\SS_{[P,F]}
		\kh{\SS_{[Q,Q]}^{-1}}_{[F,F]}
		\SS_{[F,P]}
		\TT_{[P,P]}^{-1} \\
		= &
		\TT_{[P,P]}^{-1} +
		\TT_{[P,P]}^{-1}
		\kh{-\SS_{[P,F]}}
		\kh{\SS_{[Q,Q]}^{-1}}_{[F,F]}
		\kh{-\SS_{[F,P]}}
		\TT_{[P,P]}^{-1},
	\end{align*}
	where the second equality follows by
	negating both $\SS_{[P,F]}$ and $\SS_{[F,P]}$.
	By definition
	the matrix $-\SS_{[P,F]}$ is entrywise nonnegative.
	By Fact~\ref{fact:inv+},
	every entry of
	$\SS_{[Q,Q]}^{-1}$
	and $\TT_{[P,P]}^{-1}$ is also nonnegative.
	Thus, the matrix
	\[
		\TT_{[P,P]}^{-1}
		\kh{-\SS_{[P,F]}}
		\kh{\SS_{[Q,Q]}^{-1}}_{[F,F]}
		\kh{-\SS_{[F,P]}}
		\TT_{[P,P]}^{-1}
	\]
	is entrywise nonnegative,
	which coupled with~(\ref{eq:diags}) and~(\ref{eq:diagt}),
	implies $\kh{\SS^{-1}}_{[u,v]} \geq 	\kh{\TT^{-1}}_{[u,v]}$.
\end{proof}

\begin{proof}[Proof of Lemma~\ref{lem:ewsup}]

By definition the matrix $-t\HH(\one_V - \ee_{w})$ is entrywise nonnegative
when $t\geq 0$.
By Fact~\ref{fact:inv+},
the matrix $\HH^{-1}(\one_{S} + t \cdot \ee_{w})$ is also entrywise nonnegative
when $0\leq t\leq 1$.
Thus, the derivative in~(\ref{eq:int}) is entrywise nonnegative,
which implies the the monotonicity of $\kh{\LL_{-S}^{-1}}_{[u,v]}$
for any pair of vertices $u,v \in V$.

We then prove the supermodularity, i.e.,
\begin{align}\label{eq:ewsup}
	\kh{\LL_{-S}^{-1} -
	\LL_{-(S+w)}^{-1}}_{[u,v]}
	\geq
	\kh{\LL_{-T}^{-1} -
	\LL_{-(T+w)}^{-1}}_{[u,v]}
\end{align}
for any $S\subseteq T\subseteq V$
and $u,v\notin T$.
Lemma~\ref{lem:monoh},
coupled with the fact that $\HH^{-1}(\one_{S} + t \cdot \ee_{w})$
and $-t\HH(\one_V - \ee_{w})$
are both entrywise nonnegative,
gives the entrywise monotonicity of the derivative
in~(\ref{eq:int}) as
\begin{align*}
	& \kh{\HH^{-1}(\one_{S} + t \cdot \ee_{w})
	\kh{ -t\HH(\one_V - \ee_{w}) }
	\HH^{-1}(\one_{S} + t \cdot \ee_{w})}_{[u,v]} \\
	\geq \
	& \kh{\HH^{-1}(\one_{T} + t \cdot \ee_{w})
	\kh{ -t\HH(\one_V - \ee_{w}) }
	\HH^{-1}(\one_{T} + t \cdot \ee_{w})}_{[u,v]}.
\end{align*}
Integrating both sides of
the above inequality with respect to $t$ on the interval $[0,1]$
gives~(\ref{eq:ewsup}).
\end{proof}
\fi

The following theorem follows by Lemma~\ref{lem:ewsup}.

\begin{theorem}\label{thm:tracesup}
The reciprocal of current flow group centrality,
	i.e., $\trace{\LL_{-S}^{-1}} / n$,
	is a monotone supermodular function.
\end{theorem}
\begin{proof}
	Let $S\subseteq T\subseteq V$ be vertex sets
	and $w\notin T$ be a vertex.
	
	For monotonicity, we have
	\begin{align*}
		\trace{\LL_{-S}^{-1}} &=
		\sum\limits_{u\notin S} \kh{\LL_{-S}^{-1}}_{[u,u]}
		\geq
		\sum\limits_{u\notin T} \kh{\LL_{-S}^{-1}}_{[u,u]}\\
		& \geq
		\sum\limits_{u\notin T} \kh{\LL_{-T}^{-1}}_{[u,u]}
		=
		\trace{\LL_{-T}^{-1}},
	\end{align*}
	where the first inequality follows by the fact
	that $\LL_{-S}^{-1}$ is entrywise nonnegative,
	and the second inequality follows from
	the entrywise monotinicity of $\LL_{-S}^{-1}$.
	
	For supermodularity,
	we have
	\begin{align*}
		& \trace{\LL_{-S}^{-1}} -
		\trace{\LL_{-(S+w)}^{-1}} \\
		= & \kh{\LL_{-S}^{-1}}_{[w,w]} +
			\sum\limits_{u\notin (S+w)} \kh{\LL_{-S}^{-1}
			- \LL_{-(S+w)}^{-1}}_{[u,u]} \\
		\geq & \kh{\LL_{-T}^{-1}}_{[w,w]} +
			\sum\limits_{u\notin (T+w)} \kh{\LL_{-S}^{-1}
			- \LL_{-(S+w)}^{-1}}_{[u,u]} \\
		\geq & \kh{\LL_{-T}^{-1}}_{[w,w]} +
			\sum\limits_{u\notin (T+w)} \kh{\LL_{-T}^{-1}
			- \LL_{-(T+w)}^{-1}}_{[u,u]} \\
		= & \trace{\LL_{-T}^{-1}} -
		\trace{\LL_{-(T+w)}^{-1}},
	\end{align*}
	where the first inequality follows from
	the entrywise monotonicity of $\LL_{-S}^{-1}$,
	and the second inequality follows from
	the entrywise supermodularity of $\LL_{-S}^{-1}$.
\end{proof}

We note that~\cite{ClPo11} has previously proved that $\trace{\LL_{-S}^{-1}}$
is monotone and supermodular by using the connection between effective resistance and commute time for random walks.
However, our proof is fully algebraic. Moreover, we  present
a more general result  that $\LL_{-S}^{-1}$ is entrywise supermodular.

Theorem~\ref{thm:tracesup} indicates that one can obtain a $(1 - \frac{k}{k-1}\cdot\frac{1}{e})$-approximation
to the optimum $\trace{\LL_{-S^*}^{-1}}$
by a simple greedy algorithm, by
picking the vertex with the maximum marginal gain each time~\cite{NeWoFi78}.  However, since computing $\trace{\LL_{-S}^{-1}}$ involves matrix inversions, a naive implementation of this greedy algorithm
will take $O(k n^4 )$ time, assuming that one matrix inversion runs in $O(n^3)$ time.
We will show in the next section how to implement this greedy algorithm
in $O(n^3)$ time using blockwise matrix inversion.

\section{A Deterministic Greedy Algorithm}\label{sec:greedy}

We now consider how to accelerate the naive greedy algorithm.
Suppose that after the $i^{\mathrm{th}}$ step,
the algorithm has selected a set $S_i$
containing $i$ vertices.  We next   compute the marginal gain
$\trace{\LL_{-S_i}^{-1}}	-	\trace{\LL_{-(S_i+u)}^{-1}}$
of each vertex $u\notin S_i$.

For a vertex $u\notin S_i$, let
$\begin{pmatrix} d_u \\ -\aa \end{pmatrix}$
denote the $u^{\mathrm{th}}$ column of the submatrix
$\LL_{-S_i}$. Then we write $\LL_{-S_i}$
in block form as
$
	\LL_{-S_i} =
	\begin{pmatrix}
		d_u & -\aa^T \\
		-\aa & \AA
	\end{pmatrix},
$
where $\AA \defeq \LL_{-(S_i+u)}$.
By blockwise matrix inversion, we have
\begin{align}\label{eq:block}
	\LL_{-S_i}^{-1} =
	\begin{pmatrix}
		\frac{1}{s} &
		\frac{1}{s} \aa^T \AA^{-1} \\
		\frac{1}{s} \AA^{-1} \aa &
		\AA^{-1} +
		\frac{1}{s}
		\AA^{-1}
		\aa \aa^T
		\AA^{-1}
	\end{pmatrix},
\end{align}
where $s = d_u - \aa^T \AA^{-1} \aa$.  Then the marginal gain of $u$  can be further expressed as
\begin{align*}
	&\trace{\LL_{-S_i}^{-1}} - \trace{\LL_{-(S_i+u)}^{-1}}
	\\
	= & \kh{\LL_{-S_i}^{-1}}_{[u,u]}
		+ \sum\limits_{v\in V\setminus (S_i + u)}
			\kh{ \kh{\LL_{-S_i}^{-1}}_{[v,v]} -
			\kh{\LL_{-(S_i + u)}^{-1}}_{[v,v]}} \\
	= & \frac{1}{s} + \frac{1}{s}
		\trace{\AA^{-1} \aa \aa^T \AA^{-1}}
	= \frac{1}{s} + \frac{1}{s} \aa^T \AA^{-1} \AA^{-1} \aa \\
	= &
		\kh{\ee_u^T \LL_{-S_i}^{-2} \ee_u}/
		\kh{\ee_u^T \LL_{-S_i}^{-1} \ee_u},
\end{align*}
where the second equality and the fourth equality
follow by~(\ref{eq:block}),
while the third equality follows by the cyclicity of trace.

By~(\ref{eq:block}), we can  also update the inverse
$\LL_{-S_i}^{-1}$ upon a vertex $u$ by
\begin{align*}
	\LL_{-(S_i + u)}^{-1} =
	\kh{
	\LL_{-S_i}^{-1} -
	\kh{ \LL_{-S_i}^{-1} \ee_u \ee_u^T \LL_{-S_i}^{-1}} /
	\kh{\ee_u^T \LL_{-S_i}^{-1} \ee_u}}_{-u}.
\end{align*}

At the first step,
we need to pick a vertex $u_1$ with minimum 
$\sum\nolimits_{v\in V}\er(u_1,v)$,
which can be done by computing $\sum\nolimits_{v\in V}\er(u,v)$
for all $u\in V$ using the relation~\cite{BoFr13}
$
	\sum\nolimits_{v\in V}\er(u,v) =
	n \kh{\LL^\dag}_{[u,u]} +
	\trace{\LL^\dag}.
$

We give the $O(n^3 + k n^2)$-time algorithm as follows.
\begin{algbox}
	$S_k = \exactGreedy\kh{G,\LL,k}$
	
	\quad
	\begin{enumerate}
		\item Compute $\LL^\dag$ by inverting $\LL$ in $O(n^3)$ time.
		\item $S_1 \gets \setof{u_1}$ where
			$
				u_1 = \argmin_{u\in V}\nolimits
					n\kh{\LL^\dag}_{[u,u]} + \trace{\LL^\dag}.
			$
		\item Compute $\LL_{-S_1}^{-1}$ in $O(n^3)$ time.
		\item Repeat the following steps for $i = 1,\ldots, k-1$:
			\begin{enumerate}
				\item
					$
						u_{i+1} \gets \argmax_{u\in (V\setminus S_i)}\nolimits
							\frac{\ee_u^T \LL_{-S_i}^{-2} \ee_u}
							{\ee_u^T \LL_{-S_i}^{-1} \ee_u}
					$, $S_{i+1} \gets S_i + u_{i+1}$
				\item Compute $\LL_{S_{i+1}}^{-1}$ in $O(n^2)$ time by
					\[
						\LL_{-(S_i + u_{i+1})}^{-1} =
							\kh{
							\LL_{-S_i}^{-1} -
							\frac{\LL_{-S_i}^{-1} \ee_{u_{i+1}}
							\ee_{u_{i+1}}^T \LL_{-S_i}^{-1}}
							{\ee_{u_{i+1}}^T \LL_{-S_i}^{-1} \ee_{u_{i+1}}}}_
							{-u_{i+1}}.
					\]
			\end{enumerate}
		\item Return $S_k$.
	\end{enumerate}
\end{algbox}

The performance of $\exactGreedy$ is characterized
in the following theorem.

\begin{theorem}
	The algorithm $S_k = \exactGreedy\kh{G,\LL,k}$
	takes an undirected positive weighted graph $G = (V,E,w)$
	with associated Laplacian $\LL$ and an integer $2\leq k\leq n$,
	and returns a vertex set $S_k\subseteq V$ with $\sizeof{S_k} = k$.
	The algorithm runs in time $O(n^3)$.
	The vertex set $S_k$ satisfies
\begin{align*}
		\trace{\LL_{-u^*}^{-1}} -
		\trace{\LL_{-S_k}^{-1}}
		\geq
		\kh{1 - \frac{k}{k-1}\cdot \frac{1}{e}}
		\kh{\trace{\LL_{-u^*}^{-1}} -
		\trace{\LL_{-S^*}^{-1}}
		}
\end{align*}
	where
	$S^* \defeq
		\argmin\limits_{\sizeof{S}\leq k}
		\trace{\LL_{-S}^{-1}}$
	and
	$u^* \defeq \argmin\limits_{u\in V}
		\sum\limits_{v\in V}\er(u,v)$.
\end{theorem}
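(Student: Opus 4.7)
The plan has two parts: verify the runtime by accounting for the cost of each step of $\exactGreedy$, then establish the approximation guarantee by identifying the loop after Step~2 as a standard greedy procedure on a monotone submodular set function and carefully invoking the Nemhauser--Wolsey--Fisher bound~\cite{NeWoFi78}.

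For the runtime, Step~1 and Step~3 each cost $O(n^3)$ by direct matrix inversion. In Step~4, each of the $k-1 \leq n$ iterations costs $O(n^2)$: reading the diagonal of $\LL_{-S_i}^{-1}$ is $O(n)$; forming the diagonal of $\LL_{-S_i}^{-2}$ via $\ee_u^T \LL_{-S_i}^{-2} \ee_u = \sum_v \kh{\LL_{-S_i}^{-1}}_{[u,v]}^2$ is $O(n^2)$; maximizing the ratio over $u$ is $O(n)$; and performing the rank-one update to obtain $\LL_{-S_{i+1}}^{-1}$ is $O(n^2)$. Summing gives $O(n^3) + O(k n^2) = O(n^3)$ since $k \leq n$.

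For the approximation, let $f(S) := \trace{\LL_{-S}^{-1}}$. I would first note that $u_1$ equals $u^* = \argmin_u f(\{u\})$, using the identity $\sum_v \er(u,v) = n\kh{\LL^\dag}_{[u,u]} + \trace{\LL^\dag} = \trace{\LL_{-u}^{-1}} = f(\{u\})$. Define $G : 2^{V \setminus \{u^*\}} \to \mathbb{R}$ by $G(T) := f(\{u^*\}) - f(\{u^*\} \cup T)$; by Theorem~\ref{thm:tracesup}, $G$ is monotone non-decreasing and submodular with $G(\emptyset) = 0$. The derivation preceding the pseudocode shows that the $i$-th iteration picks the vertex maximizing the marginal decrease in $f$, which is exactly the standard greedy step for $G$. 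Writing $T_{k-1} = S_k \setminus \{u^*\}$, the Nemhauser--Wolsey--Fisher bound then gives
\[
	G(T_{k-1}) \geq \kh{1 - \kh{1 - 1/\sizeof{T^*}}^{k-1}} G(T^*)
\]
for any candidate $T^* \subseteq V \setminus \{u^*\}$.

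The main subtlety is choosing $T^*$ well, which I handle by a case split on whether $u^* \in S^*$. Let $\Delta := f(\{u^*\}) - f(S^*)$, which is nonnegative since any size-$k$ superset of $\{u^*\}$ is feasible for $S^*$ and $f$ is monotone non-increasing. If $u^* \in S^*$, take $T^* = S^* \setminus \{u^*\}$ of size $k-1$, giving $G(T^*) = \Delta$ and thus $G(T_{k-1}) \geq (1 - 1/e)\Delta$. If $u^* \notin S^*$, take the \emph{larger} comparator $T^* = S^* \subseteq V \setminus \{u^*\}$ of size $k$; monotonicity of $f$ gives $G(T^*) = f(\{u^*\}) - f(\{u^*\} \cup S^*) \geq \Delta$, so NWF yields $G(T_{k-1}) \geq (1 - (1-1/k)^{k-1})\Delta$. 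The last step is to check that both factors dominate $1 - \frac{k}{k-1}\cdot\frac{1}{e}$: the Case~1 factor $1 - 1/e$ dominates trivially, while Case~2 reduces to verifying $(1 + 1/(k-1))^k \geq e$, which follows from the classical fact that $(1 + 1/n)^{n+1}$ decreases to $e$. Translating $G(T_{k-1})$ and $\Delta$ back into the trace notation of the theorem statement finishes the proof. The main obstacle I expect is precisely the $u^* \notin S^*$ case: the natural size-$(k-1)$ comparator $S^* \setminus \{v\}$ does not obviously dominate $\Delta$, and the key insight is that NWF with a size-$k$ comparator but only $k-1$ greedy steps still gives a strong enough bound—this is where the $k/(k-1)$ factor in the approximation ratio originates.
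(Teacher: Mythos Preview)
Your proof is correct, but the approximation argument is more circuitous than the paper's. You repackage the greedy loop as submodular maximization of $G(T) = f(\{u^*\}) - f(\{u^*\}\cup T)$ over $V\setminus\{u^*\}$, invoke NWF as a black box, and then need a case split on whether $u^*\in S^*$ to compare $G(T^*)$ with $\Delta$. The paper instead works directly with $f$ and $S^*$: from supermodularity and monotonicity one gets, uniformly in $i\geq 1$,
\[
  f(S_i) - f(S_{i+1}) \;\geq\; \tfrac{1}{k}\bigl(f(S_i) - f(S^*)\bigr),
\]
because the greedy step dominates each of the $k$ marginal gains obtained by telescoping $f(S_i)-f(S_i\cup S^*)$, and $f(S_i\cup S^*)\leq f(S^*)$. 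Iterating $k-1$ times from $S_1=\{u^*\}$ and bounding $(1-1/k)^{k-1}\leq \frac{k}{k-1}\cdot\frac{1}{e}$ finishes in three lines with no case split. Your detour does buy something pedagogically: it isolates exactly where the $\frac{k}{k-1}$ factor comes from (the mismatch between $k-1$ greedy steps and a size-$k$ comparator in the $u^*\notin S^*$ case), whereas in the paper's argument this factor emerges only at the end from the inequality $(1-1/k)^{k-1}\leq\frac{k}{k-1}\cdot\frac{1}{e}$. The runtime analysis is the same in both.
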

\begin{proof}
	The running time is easy to verify.
	We only need to prove the approximation ratio.
	
By supermodularity, for any $i\geq 1$
	\begin{align*}
		\trace{\LL_{-S_i}^{-1}} -
		\trace{\LL_{-S_{i+1}}^{-1}}
		\geq \frac{1}{k}
		\kh{\trace{\LL_{-S_i}^{-1}} -
			\trace{\LL_{-S*}^{-1}}},
	\end{align*}
	which implies
	\begin{align*}
		\trace{\LL_{-S_{i+1}}^{-1}} -
			\trace{\LL_{-S^*}^{-1}} \leq
			\kh{1 - \frac{1}{k}}
			\kh{ \trace{\LL_{-S_{i}}^{-1}} -
			\trace{\LL_{-S^*}^{-1}}}.
	\end{align*}
	Then, we have
	\begin{align*}
		\trace{\LL_{-S_{k}}^{-1}} - \trace{\LL_{-S^*}^{-1}} &\leq
		\kh{1 - \frac{1}{k}}^{k-1}
		\kh{ \trace{\LL_{-S_{1}}^{-1}} - \trace{\LL_{-S^*}^{-1}}} \\
		&\leq
		\frac{k}{k-1} \cdot \frac{1}{e}
		\kh{ \trace{\LL_{-S_{1}}^{-1}} - \trace{\LL_{-S^*}^{-1}}},
	\end{align*}
	which coupled with
	$\trace{\LL_{-S_{1}}^{-1}} =
		\trace{\LL_{-u^*}^{-1}}$ completes the proof. 
\end{proof}

\section{A Randomized Greedy Algorithm}\label{sec:algo}

The deterministic greedy algorithm  $\exactGreedy$ has a time complexity $O(n^3)$, which is   still  not acceptable  for large networks. In this section, we provide an efficient randomized  algorithm, which achieves a  $(1 - \frac{k}{k-1}\cdot\frac{1}{e}-\eps)$  approximation factor in time $\Otil(km\eps^{-2})$.

 To further accelerate algorithm $\exactGreedy$ we need to
compute the marginal gains
\begin{align}\label{eq:stepi}
	\trace{\LL_{-S_i}^{-1}}
	-
	\trace{\LL_{-(S_i+u)}^{-1}} =
	\frac{\ee_u^T \LL_{-S_i}^{-2} \ee_u}
							{\ee_u^T \LL_{-S_i}^{-1} \ee_u}
\end{align}
for all $u\in V$
and a vertex set $S_i\subseteq V$ more quickly.
We also need a faster way to compute
\begin{align}\label{eq:step1}
	\sum\nolimits_{v\in V}\er(u,v) =
	n \kh{\LL^\dag}_{[u,u]} +
	\trace{\LL^\dag}
\end{align}
for all $u\in V$ at the $1^{\mathrm{st}}$ step.
We will show how to solve both problems in nearly linear
time using Johnson-Lindenstrauss Lemma
and Fast
SDDM Solvers.
Our routines are motivated by the effective resistance estimation routine
in~\cite{SS11,KoLePe16}.

\begin{lemma}[Johnson-Lindenstrauss Lemma~\cite{JL84}]
	\label{lem:jl}
	\quad Let \\$\vv_1,\vv_2,\cdots,\vv_n \in \mathbb{R}^d$
	be fixed vectors
	and $0 < \eps < 1$ be a real number.
	Let $q$ be a positive integer such that
	$
		q \geq 4(\eps^2 / 2 - \eps^3 / 3)^{-1} \ln n
	$
	and $\QQ_{q \times d}$ be a random matrix obtained by
	first choosing each of its entries from
	Gaussian distribution $N(0,1)$
	independently
	and then
	normalizing each of
	its columns to a length of $1$.
	With high probability, the following statement holds
	for any $1\leq i, j\leq n$:
	\[
			(1 - \eps) \norm{\vv_i - \vv_j}^2
			\leq \norm{\QQ \vv_i - \QQ \vv_j}^2 \leq
			(1 + \eps) \norm{\vv_i - \vv_j}^2.
	\]
\end{lemma}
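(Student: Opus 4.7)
My plan is the classical three-step Johnson--Lindenstrauss argument: reduce the pairwise statement to a single-vector concentration bound, prove that bound using a chi-squared tail estimate, and finish with a union bound over pairs.

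\textbf{Reduction.} For any pair $(i,j)$, set $\uu \defeq \vv_i - \vv_j$; the claim for that pair becomes $(1-\eps)\norm{\uu}^2 \leq \norm{\QQ \uu}^2 \leq (1+\eps)\norm{\uu}^2$. Both sides scale quadratically in $\uu$, so we may assume $\norm{\uu} = 1$. Since there are $\binom{n}{2} < n^2$ such differences, it suffices to show that for every fixed unit vector $\uu \in \mathbb{R}^d$, the event $\abs{\norm{\QQ\uu}^2 - 1} > \eps$ occurs with probability at most $O(1/n^3)$; a union bound over pairs then completes the proof.

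\textbf{Core concentration.} Let $\GG \in \mathbb{R}^{q \times d}$ be the underlying matrix of i.i.d.\ $N(0,1)$ entries used to construct $\QQ$, and let $c_j \defeq \norm{\GG_{[:,j]}}$ denote the column norms. If the column normalization were replaced by the deterministic rescaling $\QQ' \defeq \GG/\sqrt{q}$, rotational invariance of the multivariate Gaussian would give $\GG \uu \sim N(\zero, \II_q)$ for fixed unit $\uu$, hence $\norm{\QQ' \uu}^2 \sim \chi_q^2/q$. The standard Chernoff tail bound for $\chi_q^2$ then yields
\[
    \prob{}{\abs{\norm{\QQ'\uu}^2 - 1} > \eta} \leq 2 \exp\kh{-\kh{\eta^2/2 - \eta^3/3}\cdot q / 2},
\]
and combining this with the hypothesis $q \geq 4(\eps^2/2 - \eps^3/3)^{-1} \ln n$ drives the failure probability below $O(1/n^3)$ with $\eta$ slightly smaller than $\eps$.

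\textbf{Handling the random normalization.} The main technical obstacle is that $\QQ$ normalizes column $j$ by the random length $c_j$ rather than by $\sqrt{q}$. My plan is to condition on the good event $\mathcal{E}$ that every $c_j \in \sqrt{q}\,[1-\delta, 1+\delta]$ for some $\delta = \Theta(\eps)$; since each $c_j^2 \sim \chi_q^2$, the same Chernoff bound combined with a union bound over the $d \leq n$ columns yields $\prob{}{\mathcal{E}^c} = o(1/n)$, well within the slack allowed by the hypothesis on $q$. On $\mathcal{E}$ we have $\QQ = \QQ' \EE$ with $\EE \defeq \sqrt{q}\,\mathrm{diag}(c_j^{-1})$ diagonal and $\EE_{[j,j]} \in [1 - O(\delta), 1 + O(\delta)]$. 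Combining this factorization with the classical independence of norm and direction for a Gaussian column (so that $\EE$ and the column-direction matrix $\QQ$ are independent), one can transfer the chi-squared tail bound for $\QQ'$ into a comparable bound for $\norm{\QQ \uu}^2$, with the perturbation from $\EE$ incurring only an additional $(1\pm O(\delta))$-multiplicative error on the Rayleigh quotient. Choosing $\delta$ small enough relative to $\eps$ absorbs this slack into the final $(1\pm \eps)$ guarantee.
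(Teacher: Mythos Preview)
The paper does not prove this lemma at all; it is quoted as a black-box tool with a citation to~\cite{JL84}. So there is no ``paper's own proof'' to compare against. That said, your proposal has two genuine gaps.

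\textbf{The column union bound.} You write ``a union bound over the $d \leq n$ columns,'' but nothing in the statement forces $d \leq n$; in the paper's applications the projection matrix is $q\times m$ (see the routines $\ERSumEst$ and $\GainsEst$), and $m$ can be as large as $\binom{n}{2}$. Your per-column tail is $\exp(-\Theta(\eps^2 q)) = n^{-\Theta(1)}$, so the bound survives when $d = \mathrm{poly}(n)$, but as stated the lemma allows arbitrary $d$, and your write-up silently assumes a bound that is simply false.

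\textbf{The transfer step.} You factor $\QQ = \QQ'\EE$ with $\EE = \sqrt{q}\,\mathrm{diag}(c_j^{-1})$ and then want to feed the vector $\EE\uu$ into the $\chi_q^2$ concentration for $\QQ'$. The independence you invoke (Gaussian norm vs.\ direction) gives that $\EE$ is independent of the \emph{column-direction matrix} $\QQ$, not of $\QQ' = \GG/\sqrt{q} = \QQ\,\EE^{-1}$. Since $\QQ'$ is a function of both $\QQ$ and $\EE$, the pair $(\QQ',\EE\uu)$ is dependent, and the fixed-vector tail bound for $\QQ'$ does not apply to the random input $\EE\uu$. To make this route work you would have to prove concentration of $\norm{\QQ\uu}^2$ directly for $\QQ$ with i.i.d.\ uniform-on-sphere columns; that is doable, but it is a separate moment/MGF calculation, not a consequence of the $\chi_q^2$ bound you have already established. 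As written, the argument is circular.

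A cleaner route (and the one implicit in most textbook proofs) is to take the projection to be $\GG/\sqrt{q}$ rather than the column-normalized matrix; the paper's wording ``normalizing each of its columns to a length of~$1$'' is nonstandard and arguably a misstatement of the usual construction, so you may simply want to flag that and prove the standard version.
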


\begin{lemma}[Fast SDDM Solvers~\cite{ST14,CKMPPRC14}]\label{lem:solve}
	There is a routine $\xx = \SDDMSolver(\SS, \bb, \eps)$
	which takes
	a Laplacian or an SDDM matrix
	$\SS_{n\times n}$
	with $m$ nonzero entries,
	a vector $\bb \in \mathbb{R}^n$, and
	an error parameter $\delta > 0$,
	and returns a vector $\xx \in \mathbb{R}^n$ such that
	$
		\norm{\xx - \SS^{-1} \bb}_{\SS}
		\leq
		\delta \norm{\SS^{-1} \bb}_{\SS}
	$
	holds with high probability,
	where $\norm{\xx}_{\SS} \defeq \sqrt{\xx^T \SS \xx}$,
	and $\SS^{-1}$ denotes the pseudoinverse of $\SS$
	when $\SS$ is a Laplacian.
	The routine runs in expected time
	$\Otil(m\log(1/\delta))$.
\end{lemma}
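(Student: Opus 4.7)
The statement of Lemma~\ref{lem:solve} is a cited black-box result, so what I sketch is a plan for reconstructing the argument of~\cite{ST14,CKMPPRC14} at a high level, not a full verification of every detail. The overall plan splits into two conceptual layers: first reduce the SDDM case to the Laplacian case, and then build a preconditioned iterative Laplacian solver whose cost is nearly linear in the number of nonzeros.

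For the SDDM-to-Laplacian reduction, I would decompose any SDDM matrix as $\SS = \LL_0 + \XX$, where $\LL_0$ is a graph Laplacian and $\XX$ is a nonnegative diagonal matrix (this is always possible because the off-diagonal entries of an SDDM are nonpositive and the row sums are nonnegative). Then I attach an auxiliary vertex $0$ to each $i \in \{1,\dots,n\}$ via an edge of weight $\XX_{[i,i]}$ to form a ``grounded'' Laplacian $\LL'$ on $n+1$ vertices. Now $\SS$ is precisely the principal submatrix of $\LL'$ with row and column $0$ deleted, so solving $\SS\xx = \bb$ reduces (up to a grounding shift) to solving the Laplacian system $\LL' \yy = \bbtil$ where $\bbtil$ is $\bb$ padded with the entry $-\one^T \bb$ so it sums to zero, and then reading off $\xx = \yy_{[1:n]} - \yy_{[0]} \cdot \one$. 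Since $\LL'$ has $O(m+n)$ nonzeros and the reduction preserves the $\SS$-norm error guarantee, it suffices to establish the Laplacian case.

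For the Laplacian case, the plan is preconditioned iteration. First compute a low-stretch spanning tree $T$ of $G$, whose Laplacian $\LL_T$ serves as a crude preconditioner satisfying $\LL_T \preceq \LL \preceq \Otil(m) \cdot \LL_T$ after rescaling, and for which tree systems are solved exactly in $O(n)$ time. Then improve this to a constant-factor spectral approximation by sampling $\Otil(n)$ off-tree edges with probabilities proportional to their tree stretch, yielding a sparse preconditioner $\HH$ with $\HH \preceq \LL \preceq O(1) \cdot \HH$. To apply $\HH^{-1}$ quickly, eliminate low-degree vertices of $\HH$ via partial Cholesky, re-sparsify the resulting Schur complement to keep edge counts linear, and recurse on the smaller graph; this produces a chain of graphs of geometrically decreasing size. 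Running preconditioned Chebyshev iteration or PCG at the top of this chain converges in $O(\log(1/\delta))$ iterations once $\HH^{-1}$ can be applied approximately.

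The running-time bound then follows from the geometric shrinkage of the chain combined with the fact that each outer iteration costs $\Otil(m)$, giving $\Otil(m \log(1/\delta))$ overall. The hard part, and what makes~\cite{ST14,CKMPPRC14} technical, is composing the error analyses: spectral sparsification guarantees hold only with high probability, Schur-complement elimination introduces new edges whose weights must be controlled, and feeding an approximate solve as a preconditioner into an iterative method requires either the recursive preconditioning framework of~\cite{ST14} or the approximate-Cholesky viewpoint of~\cite{CKMPPRC14} to bound the propagation of error across levels. Finally, the target guarantee $\norm{\xx - \SS^{-1}\bb}_{\SS} \leq \delta \norm{\SS^{-1}\bb}_{\SS}$ is the natural output of PCG measured in the $\SS$-inner-product and drops out for free once a constant-quality spectral preconditioner is in place.
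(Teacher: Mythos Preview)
The paper does not prove Lemma~\ref{lem:solve}; it is stated as a black-box citation of~\cite{ST14,CKMPPRC14} with no accompanying argument. You correctly recognized this, and your high-level reconstruction (SDDM-to-Laplacian grounding reduction, low-stretch spanning tree preconditioners, spectral sparsification, partial Cholesky with recursive Schur complements, and preconditioned Chebyshev/PCG on the resulting chain) is an accurate summary of the techniques in those references, so there is nothing to compare against and nothing missing relative to the paper.
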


\subsection{Approximation of~(\ref{eq:step1})}



To approximate~(\ref{eq:step1}) we
need to approximate all diagonal entries of $\LL^\dag$,
i.e., $\ee_u^T \LL^\dag \ee_u$ for all $u\in V$.
We first write $\ee_u^T \LL^\dag \ee_u$ in an euclidian norm
as
\begin{align*}
	&\ee_u^T \LL^\dag \ee_u =
	\ee_u^T \LL^\dag \LL \LL^\dag \ee_u
	=
	\ee_u^T \LL^\dag \BB^T \WW \BB \LL^\dag \ee_u \\
 	= &\ee_u^T \LL^\dag \BB^T \WW^{1/2}
 		\WW^{1/2} \BB \LL^\dag \ee_u
	= \norm{\WW^{1/2} \BB \LL^\dag \ee_u}^2.
\end{align*}

Then we use Johnson-Lindenstrauss Lemma to reduce the dimensions.
Let $\QQ_{q\times m}$ be a random Gaussian matrix
where $q = \ceil{4(\eps^2 / 2 - \eps^3 / 3)^{-1} \ln n}$.
By Lemma~\ref{lem:jl},
\begin{align*}
	\norm{\WW^{1/2} \BB \LL^\dag \ee_u}^2 \approx_{1+\eps}
	\norm{\QQ \WW^{1/2} \BB \LL^\dag \ee_u}^2
\end{align*}
holds for all $u\in V$ with high probability.
Here we can use sparse matrix multiplication to compute
$\QQ \WW^{1/2} \BB$, and Fast SDDM solvers to compute
$\QQ \WW^{1/2} \BB \LL^\dag$.

We give the routine for approximating~(\ref{eq:step1})
as follows.

\begin{algbox}
	$\left\{ r_u \right\}_{u\in V} =
		\ERSumEst\kh{G,\LL, \eps}$
		
	\quad
	\begin{enumerate}
		\item Set
			$
				\delta = \frac{\eps}{9n^2}
						\kh{ \frac{ (1 - \eps/3) w_{\mathrm{min}} }
						{ (1 + \eps/3) w_{\mathrm{max}} } }^{1/2}
			$.
		\item Generate a random Gaussian matrix
			$\QQ_{q\times m}$ where
				$q = \ceil{4((\eps/3)^2 / 2 - (\eps/3)^3 / 3)^{-1} \ln n}$.
		\item Compute $\kh{\QQ \WW^{1/2} \BB}_{q\times n}$
			by sparse matrix multiplication in $O(qm)$ time.
		\item Compute an approximation $\ZZtil_{q\times n}$ to
			$\ZZ_{q\times n} \defeq \QQ \WW^{1/2} \BB \LL^\dag$ by
			\[
				\ZZtil_{[i,:]}
				\gets
				\SDDMSolver(
					\LL, \kh{\QQ \WW^{1/2} \BB}_{[i,:]}^T,
						\delta)^T
			\]
			where $1\leq i\leq q$.
		\item $r_u \gets \norm{\ZZtil}_F^2 + \norm{\ZZtil \ee_u}^2$
			for all $u\in V$ and return $\left\{ r_u \right\}_{u\in V}$.
	\end{enumerate}
\end{algbox}

\begin{lemma}\label{lem:diaglb}
	Let $\LL$ be the Laplacian of a connected graph.
	Then for any $u\in V$,
	\[
		\ee_u^T \LL^\dag \ee_u \geq \frac{n-1}{n^2 w_{\mathrm{max}}}.
	\]
\end{lemma}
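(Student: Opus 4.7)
The plan is to lower-bound each diagonal entry of $\LL^\dag$ by passing to the spectral decomposition and controlling the top eigenvalue of $\LL$. Let $0=\lambda_1<\lambda_2\leq\cdots\leq\lambda_n$ be the eigenvalues of $\LL$ with orthonormal eigenvectors $\vv_1,\ldots,\vv_n$. Since $G$ is connected, $\vv_1 = \one/\sqrt{n}$, and
\[
\ee_u^T \LL^\dag \ee_u = \sum_{i=2}^{n} \frac{\vv_i(u)^2}{\lambda_i} \geq \frac{1}{\lambda_n} \sum_{i=2}^{n} \vv_i(u)^2 = \frac{1}{\lambda_n}\kh{1 - \vv_1(u)^2} = \frac{n-1}{n\,\lambda_n},
\]
using that $\{\vv_i\}$ is an orthonormal basis of $\mathbb{R}^n$ so $\sum_{i=1}^n \vv_i(u)^2 = 1$.

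The main obstacle is then to show $\lambda_n \leq n\,w_{\max}$. The cleanest route is a Loewner-order comparison with the complete graph. Let $\LL_{K_n} = n\II - \one\one^T$ be the Laplacian of the unweighted complete graph on $V$, whose nonzero eigenvalues are all equal to $n$. I would argue that $w_{\max}\,\LL_{K_n} - \LL$ is itself the Laplacian of a nonnegatively weighted graph on $V$: namely the complete graph in which each pair $(u,v)$ carries weight $w_{\max} - w(u,v)$ if $(u,v)\in E$ and weight $w_{\max}$ otherwise. All these weights are nonnegative by the definition of $w_{\max}$, hence $w_{\max}\,\LL_{K_n} - \LL \succeq \zero$, i.e.\ $\LL \preceq w_{\max}\,\LL_{K_n}$. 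Taking the largest eigenvalue of both sides yields $\lambda_n \leq w_{\max}\cdot n$.

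Plugging this back in gives
\[
\ee_u^T \LL^\dag \ee_u \geq \frac{n-1}{n\,\lambda_n} \geq \frac{n-1}{n^2\,w_{\max}},
\]
which is the desired bound. The only subtle point is handling the kernel direction correctly: $\LL^\dag$ only has nonzero eigenvalues $1/\lambda_2,\ldots,1/\lambda_n$, so the eigendecomposition above excludes the $\vv_1$ term and the identity $\sum_{i=2}^n \vv_i(u)^2 = (n-1)/n$ uses precisely that $\vv_1(u)=1/\sqrt{n}$ for every $u$. Everything else is a direct spectral calculation combined with the Loewner-order comparison.
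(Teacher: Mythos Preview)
Your proof is correct and follows essentially the same approach as the paper: both arguments project $\ee_u$ onto the orthogonal complement of $\one$ (you via the explicit eigendecomposition, the paper via writing $\ee_u^T\LL^\dag\ee_u = (\ee_u - \tfrac{1}{n}\one)^T\LL^\dag(\ee_u - \tfrac{1}{n}\one)$), lower-bound $\LL^\dag$ on that subspace by $1/\lambda_n$, compute $\norm{\ee_u - \tfrac{1}{n}\one}^2 = (n-1)/n$, and then bound $\lambda_n \leq n\,w_{\max}$ via the Loewner comparison $\LL \preceq w_{\max}\LL_{K_n}$.
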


\begin{proof}
$\ee_u^T \LL^\dag \ee_u$ can be evaluated as
	\begin{align*}
		&\quad \ee_u^T \LL^\dag \ee_u= \ee_u^T \PPi \LL^\dag \PPi \ee_u
		= \kh{\ee_u - \frac{1}{n}\one}^T \LL^\dag
			\kh{\ee_u - \frac{1}{n}\one} \\
		&\geq \frac{1}{\lambda_n \kh{\LL}}
			\norm{\ee_u - \frac{1}{n}\one}^2
		\geq \frac{1}{\lambda_n \kh{w_{\mathrm{max}}\cdot \LL^{K_n}}}
			\norm{\ee_u - \frac{1}{n}\one}^2 \\
		& = \frac{1}{n w_{\mathrm{max}}} \cdot \frac{n-1}{n}
		  = \frac{n-1}{n^2 w_{\mathrm{max}}},
	\end{align*}
	where the second inequality follows by
	$\LL \preceq w_{\mathrm{max}} \cdot \LL^{K_n}$.
\end{proof}


\begin{lemma}\label{lem:ersumsest}
	The routine $\ERSumEst$
   runs in time $\Otil(m)$.
	For $0 < \eps \leq 1/2$,
	the $\left\{ r_u \right\}_{u\in V}$ returned by
	$\ERSumEst$ satisfies
	\[
		(1-\eps) \sum\nolimits_{v\in V}\er(u,v) \leq
		r_u \leq
		(1+\eps) \sum\nolimits_{v\in V}\er(u,v)
	\]
	with high probability.
\end{lemma}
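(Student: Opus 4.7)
The plan is to bound runtime step-by-step and then handle accuracy by decomposing the total error into a Johnson--Lindenstrauss (JL) sketch piece and an SDDM solver piece, each controlled to be at most $\eps/3$ multiplicatively so that their composition sits inside $(1\pm\eps)$. The target identity~(\ref{eq:step1}), namely $\sum_{v\in V}\er(u,v) = n\,\kh{\LL^\dag}_{[u,u]} + \trace{\LL^\dag}$, together with the factorization $\kh{\LL^\dag}_{[u,u]} = \norm{\WW^{1/2}\BB\LL^\dag\ee_u}^2$, tells us what the routine is estimating; the algorithm essentially computes an approximate row-sparsified sketch of $\LL^\dag$, then reads off the two ingredients. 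For runtime: Step~2 samples a $q = \Theta(\eps^{-2}\log n)$-row Gaussian and is negligible; Step~3 multiplies $\QQ$ against the sparse matrix $\WW^{1/2}\BB$ (with $O(m)$ nonzeros) in $O(qm)$ time; Step~4 invokes $\SDDMSolver$ $q$ times, each expected-time $\Otil(m\log(1/\delta))$ by Lemma~\ref{lem:solve}, and since $1/\delta$ is polynomial in $n$, $w_{\max}/w_{\min}$, and $1/\eps$, the extra log factor is absorbed by $\Otil(\cdot)$; Step~5 computes $n$ inner products in dimension $q$ in $O(nq)$ time. The total is $\Otil(m)$.

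For the ideal-sketch accuracy, set $\ZZ := \QQ\WW^{1/2}\BB\LL^\dag$, so that $\norm{\ZZ\ee_u}^2 = \norm{\QQ\WW^{1/2}\BB\LL^\dag\ee_u}^2$ and $\norm{\ZZ}_F^2 = \sum_{u}\norm{\ZZ\ee_u}^2$. Apply Lemma~\ref{lem:jl} with parameter $\eps/3$ to the $n+1$ vectors $\{\WW^{1/2}\BB\LL^\dag\ee_u\}_{u\in V}\cup\{\zero\}$; this yields, with high probability, $\norm{\ZZ\ee_u}^2 \Approx{1+\eps/3} \kh{\LL^\dag}_{[u,u]}$ for every $u\in V$. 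Summing over $u$ transfers the $(1\pm\eps/3)$ bound to $\norm{\ZZ}_F^2 \Approx{1+\eps/3} \trace{\LL^\dag}$, since a convex combination of numbers in a $(1\pm\eps/3)$ window lies in the same window.

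The main obstacle is propagating the SDDM-solver error, because Lemma~\ref{lem:solve} only controls the $\LL$-seminorm residual $\bigl\|\ZZtil_{[i,:]}^T - \LL^\dag(\QQ\WW^{1/2}\BB)_{[i,:]}^T\bigr\|_{\LL}$, whereas we need $\ell_2$ control on $\ZZtil\ee_u$ and on $\norm{\ZZtil}_F$. The plan is (i) to convert per-row $\LL$-seminorm error to per-row $\ell_2$ error using the spectral comparisons $\LL \preceq w_{\max}\LL^{K_n}$ and (on the orthogonal complement of $\one$) $\LL \succeq w_{\min}\LL^{K_n}$ from the proof of Lemma~\ref{lem:diaglb}, which yield an $\ell_2$-to-$\LL$ blow-up of order $\sqrt{w_{\max}/w_{\min}}$ on vectors orthogonal to $\one$; (ii) sum the squared row errors and apply Cauchy--Schwarz to bound $\norm{\ZZtil\ee_u - \ZZ\ee_u}^2$ and $\bigl|\norm{\ZZtil}_F^2 - \norm{\ZZ}_F^2\bigr|$ in absolute terms by a quantity proportional to $\delta^2\,\operatorname{poly}(n)\cdot w_{\max}/w_{\min}$; and (iii) turn absolute error into relative error using the lower bound $\kh{\LL^\dag}_{[u,u]}\ge (n-1)/(n^2 w_{\max})$ from Lemma~\ref{lem:diaglb}, which also lower-bounds $\trace{\LL^\dag}$ by $(n-1)^2/(n^2 w_{\max})$. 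The choice $\delta = \Theta\bigl(\eps/n^2 \cdot \sqrt{w_{\min}/w_{\max}}\bigr)$ made by the algorithm is precisely calibrated so that this relative error is at most $\eps/3$ on each of $\kh{\LL^\dag}_{[u,u]}$ and $\trace{\LL^\dag}$. Combining the JL and solver factors gives $r_u \Approx{1+\eps} \sum_{v\in V}\er(u,v)$, using $(1+\eps/3)^2 \le 1+\eps$ and $(1-\eps/3)^2\ge 1-\eps$ for $\eps \le 1/2$. A union bound over the $n$ choices of $u$ keeps the failure probability small.
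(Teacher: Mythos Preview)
Your runtime analysis and the JL portion of the accuracy argument are fine, and the overall decomposition into a JL piece plus a solver piece matches the paper. The gap is in step~(i) of your solver-error propagation: the spectral comparison ``$\LL \succeq w_{\min}\LL^{K_n}$ on $\one^\perp$'' is \emph{false}. On $\one^\perp$ the complete-graph Laplacian acts as $n\II$, so your claim is equivalent to $\lambda_2(\LL) \ge n\,w_{\min}$; a unit-weight path already has $\lambda_2(\LL)=\Theta(1/n^2)$. Lemma~\ref{lem:diaglb} uses only the \emph{upper} bound $\LL \preceq w_{\max}\LL^{K_n}$, never a lower one. Consequently the $\ell_2$-to-$\LL$ conversion factor is $1/\sqrt{\lambda_2(\LL)}$, which can be as large as $\Theta(n/\sqrt{w_{\min}})$, not $\sqrt{w_{\max}/w_{\min}}$; the ``precise calibration'' of $\delta$ you cite does not follow from your stated bound.

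The paper sidesteps any spectral lower bound on $\LL$ entirely. Rather than converting $\norm{\cdot}_{\LL}$ to $\norm{\cdot}_2$, it writes $\ee_u - \tfrac{1}{n}\one$ as an average of the differences $\ee_u-\ee_v$, telescopes each $\ee_u-\ee_v$ along a $u$--$v$ path into edge differences $\ee_a-\ee_b$, and via the triangle inequality and Cauchy--Schwarz arrives at $n\,\norm{(\ZZ-\ZZtil)\BB^T}_F$. Inserting $\WW^{1/2}$ at a cost of $w_{\min}^{-1/2}$ turns this into $\norm{(\ZZ-\ZZtil)\BB^T\WW^{1/2}}_F = \bigl(\sum_i \norm{(\ZZ-\ZZtil)_{[i,:]}^T}_{\LL}^2\bigr)^{1/2}$, which is \emph{exactly} the quantity controlled by the SDDM guarantee in Lemma~\ref{lem:solve}; the right-hand side $\sum_i\norm{\ZZ_{[i,:]}^T}_{\LL}^2$ is then bounded via JL and Foster's identity $\sum_{e\in E} w(e)\,\er(e)=n-1$. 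This routing-through-edges trick is the missing ingredient in your argument.
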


\begin{proof}
	It suffices to prove that
	\begin{align}\label{eq:zapprox}
		(1 - \eps/3) \norm{\ZZ \ee_u}^2 \leq
		\norm{\ZZtil \ee_u}^2
		\leq
		(1 + \eps/3) \norm{\ZZ \ee_u}^2
	\end{align}
	holds for all $u\in V$,
	where $\ZZ$ and $\ZZtil$ are defined in the routine $\ERSumEst$.
	Let $P_{uv}$ denote an arbitrary simple path connecting vertices
	$u$ and $v$.
	We fist upper bound the difference between
	the square roots of the above two values,
	i.e., $\norm{\ZZ \ee_u}$ and $\norm{\ZZtil \ee_u}$:
	\begin{align*}
		&
		\abs{\norm{\ZZ \ee_u} - \norm{\ZZtil \ee_u}}
		\leq \norm{(\ZZ - \ZZtil) \ee_u}
		\qquad \text{by the triangle inequality} \\
		= & 
		\norm{(\ZZ - \ZZtil) \kh{\ee_u - \frac{1}{n}\one}}
		= n^{-1/2}
		\norm{(\ZZ - \ZZtil)
			\sum\nolimits_{v\neq u} \kh{\ee_u - \ee_v}} \\
		\leq &
		n^{-1/2} \sum\limits_{v\neq u}
		\norm{(\ZZ - \ZZtil) \kh{\ee_u - \ee_v}} \\
		\leq &
		n^{-1/2} \sum\limits_{v\neq u}
		\sum\limits_{(a,b) \in P_{uv}}
		\norm{(\ZZ - \ZZtil) \kh{\ee_a - \ee_b}} 
		\qquad \text{by applying the triangle inequality twice} \\
		\leq & \kh{n \sum\limits_{v\neq u}
		\sum\limits_{(a,b) \in P_{uv}}
		\norm{(\ZZ - \ZZtil) \kh{\ee_a - \ee_b}}^2 }^{1/2}
		\qquad  \text{by Cauchy-Schwarz} \\
		\leq & n \kh{\sum\limits_{(a,b)\in E}
			\norm{(\ZZ - \ZZtil) \kh{\ee_a - \ee_b}}^2}^{1/2} \\
		= & n \norm{(\ZZ - \ZZtil) \BB^T }_F \qquad
			\text{$\norm{\cdot}_F$ denotes Frobenius norm} \\
		\leq &
		 n w_{\mathrm{min}}^{-1/2}
			\norm{(\ZZ - \ZZtil) \BB^T \WW^{1/2}}_F
			\qquad \text{since $\norm{\WW} \geq w_{\mathrm{min}}$} \\
		\leq &
		n w_{\mathrm{min}}^{-1/2} \delta
			\norm{\ZZ \BB^T \WW^{1/2}}_F
			\qquad \text{by Lemma~\ref{lem:solve}} \\
		= & n w_{\mathrm{min}}^{-1/2} \delta
		\kh{
		  \sum\limits_{(a,b) \in E} w(a,b) \cdot
		  	\norm{\ZZ (\ee_u - \ee_v)}}^{1/2}
		  \\
		\leq & n w_{\mathrm{min}}^{-1/2} \delta
		\kh{
		  (1 + \eps/3)
		  \sum\limits_{(a,b) \in E} w(a,b) \er(a,b) }^{1/2}
		  	\qquad \text{by Lemma~\ref{lem:jl}} \\
		= &
		n \delta \kh{\frac{(1 + \eps/3)(n-1)}{w_{\mathrm{min}}}}^{1/2}.
	\end{align*}
	By Lemma~\ref{lem:jl} and Lemma~\ref{lem:diaglb}, we have
	\begin{align*}
		\norm{\ZZ \ee_u}^2 \geq
		(1 - \eps/3) \ee_u^T \LL^\dag \ee_u \geq
		(1 - \eps/3) \frac{n - 1}{n^2 w_{\mathrm{max}}},
	\end{align*}
	which combined with the upper bound above and
	the fact that
	\[
		\delta = \frac{\eps}{9n^2}
			\kh{ \frac{ (1 - \eps/3) w_{\mathrm{min}} }
				{ (1 + \eps/3) w_{\mathrm{max}} } }^{1/2}
	\] gives$\abs{\norm{\ZZ \ee_u} - \norm{\ZZtil \ee_u}}
		\leq (\eps/9) \norm{\ZZ \ee_u}$.	Then, we have
	\begin{align*}
		&\abs{\norm{\ZZ \ee_u}^2 - \norm{\ZZtil \ee_u}^2}
		= \abs{\norm{\ZZ \ee_u} - \norm{\ZZtil \ee_u}}
			\kh{\norm{\ZZ \ee_u} + \norm{\ZZtil \ee_u}} \\
		\leq &
		(\eps/9) \norm{\ZZ \ee_u} \cdot
		(2 + (\eps/9)) \norm{\ZZ \ee_u}
		\leq
		(\eps/3) \norm{\ZZ \ee_u}^2,
	\end{align*}
	which implies~(\ref{eq:zapprox}).
\end{proof}

\subsection{Approximation of~(\ref{eq:stepi})}

We first approximate  the numerator of~(\ref{eq:stepi}), which can be recast
in an euclidian norm as $	\ee_u^T \LL_{-S_i}^{-2} \ee_u =	\norm{\LL_{-S_i}^{-1} \ee_u}^2$.
We then once again
use Johnson-Lindenstrauss Lemma to reduce the dimensions.
Let
$\PP_{p\times n}$ be a random Gaussian matrix
where $p = \ceil{4(\eps^2 / 2 - \eps^3 / 3)^{-1} \ln n}$.
By Lemma~\ref{lem:jl}, we have that for all $u\in V$,
\begin{align*}
	(1 - \eps)
	\norm{\LL_{-S_i}^{-1} \ee_u}^2 \leq
	\norm{\PP \LL_{-S_i}^{-1} \ee_u}^2 \leq
	(1 + \eps)
	\norm{\LL_{-S_i}^{-1} \ee_u}^2,
\end{align*}
where $\PP \LL_{-S_i}^{-1}$ can be computed using Fast SDDM Solvers.

We continue to approximate the denominator of~(\ref{eq:stepi}).
Since $\LL_{-S_i}$ is an SDDM matrix,
we can express it in terms of the  sum of a Laplacian and a nonnegative
diagonal matrix as $\LL_{-S_i} = \BB'^T \WW' \BB' + \XX$.
Then we can write
\begin{align}
	&\ee_u^T \LL_{-S_i}^{-1} \ee_u
	= \ee_u^T \LL_{-S_i}^{-1} \LL_{-S_i} \LL_{-S_i}^{-1} \ee_u \notag \\
	= &\ee_u^T \LL_{-S_i}^{-1} \kh{\BB'^T \WW' \BB' + \XX}
		\LL_{-S_i}^{-1} \ee_u \notag \\
	= &
	\ee_u^T \LL_{-S_i}^{-1} \BB'^T \WW' \BB'
		\LL_{-S_i}^{-1} \ee_u +
	\ee_u^T \LL_{-S_i}^{-1} \XX
		\LL_{-S_i}^{-1} \ee_u \notag \\
	= &
	\norm{\WW'^{1/2} \BB' \LL_{-S_i}^{-1} \ee_u}^2 +
	\norm{\XX^{1/2} \LL_{-S_i}^{-1} \ee_u}^2. \label{eq:deno}
\end{align}
Let
$\QQ_{q\times m}$ and $\RR_{r\times n}$
be random Gaussian matrices
where $q = r = \ceil{4(\eps^2 / 2 - \eps^3 / 3)^{-1} \ln n}$.
By Lemma 7.1, we have for all $u\in V$
\begin{align*}
	(\ref{eq:deno}) \approx_{1 + \eps}
	\norm{\QQ \WW'^{1/2} \BB' \LL_{-S_i}^{-1} \ee_u}_2^2 +
	\norm{\RR \XX^{1/2} \LL_{-S_i}^{-1} \ee_u}_2^2.
\end{align*}
Here $\QQ \WW'^{1/2} \BB' \LL_{-S_i}^{-1}$ and
$\RR \XX^{1/2} \LL_{-S_i}^{-1}$ can be computed
using Fast SDDM Solvers.

We give the routine for approximating~(\ref{eq:stepi}) as follows.
\begin{algbox}
	$\left\{ g_u \right\}_{u\in V} =
		\GainsEst\kh{G,\LL, S_i, \eps}$
		
	\quad
	\begin{enumerate}
		\item Set $\delta_1 = \frac{\wmin \eps}{27 \wmax n^4}
			\kh{\frac{ \kh{1 - \eps/9} }{ (1+\eps/9)n } }^{1/2}$ and
			$\delta_2 = \delta_3 = \frac{1}{4n^4 \wmax}
				\kh{\frac{\eps \wmin^{3/2}}{9n}}^{1/2}$.
		\item Generate random Gaussian matrices
			$\PP_{p\times n}, \QQ_{q\times m}, \RR_{r\times n}$
			where $p,q,r =
				\ceil{4((\eps/9)^2 / 2 - (\eps/9)^3 / 3)^{-1} \ln n}$.
		\item Let $\XX = \mathrm{Diag}\kh{\LL_{-S_i} \one}$
			and denote $\LL_{-S_i} - \XX$ by $\BB'^T \WW' \BB'$.
		\item Compute $\QQ \WW'^{1/2} \BB'$ and $\RR \XX^{1/2}$
			by sparse matrix multiplication in $O(pm)$ time.
		\item Compute approximations
			$\ZZtil^{(1)}$ to
				$\ZZ^{(1)} \defeq \PP \LL_{-S_i}^{-1}$,
			$\ZZtil^{(2)}$ to
				$\ZZ^{(2)} \defeq \QQ \WW'^{1/2} \BB' \LL_{-S_i}^{-1}$, and
			$\ZZtil^{(3)}$ to
				$\ZZ^{(3)} \defeq \RR \XX^{1/2} \LL_{-S_i}^{-1}$
			by
			\begin{enumerate}
				\item $\ZZtil^{(1)}_{[i,:]}
					\gets \SDDMSolver(\LL_{-S_i}, \PP_{[i,:]}^T, \delta_1)^T$,
				\item $\ZZtil^{(2)}_{[i,:]}
					\gets \SDDMSolver(\LL_{-S_i},
					\kh{\QQ \WW'^{1/2} \BB'}_{[i,:]}^T, \delta_2)^T$, and
				\item $\ZZtil^{(3)}_{[i,:]}
					\gets \SDDMSolver(\LL_{-S_i},
					\kh{\RR \XX^{1/2}}_{[i,:]}^T, \delta_3)^T$.
			\end{enumerate}
		\item $g_u \gets
			\frac{\norm{\ZZtil^{(1)}\ee_u}^2}
			{\norm{\ZZtil^{(2)}\ee_u}^2 + \norm{\ZZtil^{(3)}\ee_u}^2}$
			for all $u\in V$ and return $\left\{ g_u \right\}_{u\in V}$.
	\end{enumerate}
\end{algbox}

\begin{lemma}\label{lem:error2}
	Let $\SS_{n\times n}$ be a positive definite matrix,
	$\MM_{d\times n}$ be an arbitrary matrix, and
	$\QQ_{q \times d}$ be a random matrix obtained by
	first choosing each of its entries from
	Gaussian distribution $N(0,1)$
	independently
	and then
	normalizing each of
	its columns to a length of $1$
	where
	\[
		q \geq 4(\eps^2 / 2 - \eps^3 / 3)^{-1} \ln n.
	\]
	Let $\ZZ_{q\times n} = \QQ \MM \SS^{-1}$.
	If $\ZZtil_{q\times n}$ is a matrix
	such that for all $1\leq i\leq q$
	\begin{align}\label{eq:deltaZ}
		\norm{\ZZ^T \ee_i - \ZZtil^T \ee_i}_{\SS}^2
		\leq
		\delta \norm{\ZZ^T \ee_i}_{\SS}^2,
	\end{align}
	then
	\[
		\norm{\ZZ - \ZZtil}_F^2 \leq
		(1 + \eps)
		\frac{\delta^2}{\lambdamin(\SS)}
		\trace{\SS^{-1}} \trace{\MM^T \MM}
	\]
	with high probability.
\end{lemma}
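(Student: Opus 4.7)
The plan is to reduce the Frobenius error bound to a sum of $\SS$-norm errors that match the hypothesis~(\ref{eq:deltaZ}), collapse the resulting sum into a single trace, and then eliminate $\QQ$ via Johnson--Lindenstrauss.

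\textbf{Step 1 (from Frobenius to $\SS$-norm, row by row).} First, I will write
\[
\|\ZZ - \ZZtil\|_F^2
= \sum_{i=1}^{q} \bigl\|(\ZZ - \ZZtil)^T \ee_i\bigr\|_2^2
\leq \frac{1}{\lambdamin(\SS)} \sum_{i=1}^{q} \bigl\|(\ZZ - \ZZtil)^T \ee_i\bigr\|_{\SS}^2,
\]
using $\|\vv\|_2^2 \leq \|\vv\|_{\SS}^2 / \lambdamin(\SS)$, valid because $\SS$ is positive definite. Applying the hypothesis~(\ref{eq:deltaZ}) to each row and summing, I obtain
\[
\|\ZZ - \ZZtil\|_F^2 \leq \frac{\delta^2}{\lambdamin(\SS)} \sum_{i=1}^{q} \|\ZZ^T \ee_i\|_{\SS}^2
= \frac{\delta^2}{\lambdamin(\SS)} \trace{\ZZ \SS \ZZ^T}.
\]
(Here I am reading the hypothesis so that the squared $\SS$-norm error picks up $\delta^2$, consistent with the solver guarantee of Lemma~\ref{lem:solve}.)

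\textbf{Step 2 (collapse the trace using $\ZZ = \QQ \MM \SS^{-1}$).} Substituting the definition of $\ZZ$ and using $\SS^{-1} \SS \SS^{-1} = \SS^{-1}$, the trace simplifies to
\[
\trace{\ZZ \SS \ZZ^T} = \trace{\QQ \MM \SS^{-1} \MM^T \QQ^T} = \bigl\|\QQ \MM \SS^{-1/2}\bigr\|_F^2.
\]
This is the key reformulation: the trace becomes the squared Frobenius norm of $\QQ$ applied to the fixed $d \times n$ matrix $\MM \SS^{-1/2}$, which is exactly the setting where JL applies.

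\textbf{Step 3 (eliminate $\QQ$ via JL, then split the trace).} Expanding the Frobenius norm columnwise and applying Lemma~\ref{lem:jl} to the $n$ fixed vectors $\{ \MM \SS^{-1/2} \ee_j \}_{j=1}^{n}$, with high probability
\[
\bigl\|\QQ \MM \SS^{-1/2}\bigr\|_F^2 = \sum_{j=1}^{n} \bigl\|\QQ \bigl(\MM \SS^{-1/2} \ee_j\bigr)\bigr\|_2^2
\leq (1+\eps)\sum_{j=1}^{n} \bigl\|\MM \SS^{-1/2} \ee_j\bigr\|_2^2
= (1+\eps)\trace{\SS^{-1} \MM^T \MM}.
\]
Finally, I apply the standard trace inequality for PSD matrices, $\trace{\AA \BB} \leq \lambdamax(\BB) \trace{\AA} \leq \trace{\BB}\trace{\AA}$ (the last step uses that every eigenvalue of a PSD matrix is at most its trace), to $\AA = \SS^{-1}$ and $\BB = \MM^T \MM$, yielding $\trace{\SS^{-1} \MM^T \MM} \leq \trace{\SS^{-1}} \trace{\MM^T \MM}$. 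Chaining Steps 1--3 gives the claimed bound.

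I do not expect a serious obstacle: the one place that requires care is recognizing that the hypothesis is phrased in $\SS$-norm (to exploit the solver guarantee), while the conclusion is in the $2$-norm, so the $\lambdamin(\SS)$ in the denominator is precisely the price paid for the norm conversion in Step 1. Everything else is either a direct application of JL to a fixed finite set of vectors, or a routine trace manipulation.
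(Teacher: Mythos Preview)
Your proposal is correct and follows essentially the same route as the paper's proof: both convert from $2$-norm to $\SS$-norm rowwise to pick up the $1/\lambdamin(\SS)$ factor, apply the hypothesis, rewrite the resulting sum as $\trace{\QQ\MM\SS^{-1}\MM^T\QQ^T}$, invoke Johnson--Lindenstrauss to drop $\QQ$ at the cost of $(1+\eps)$, and finish with the PSD trace inequality. The only cosmetic difference is that the paper writes the trace as $\|\QQ(\MM\SS^{-1}\MM^T)^{1/2}\|_F^2$ (a $d\times d$ factor) whereas you use $\|\QQ\MM\SS^{-1/2}\|_F^2$ (a $d\times n$ factor); your choice has the minor advantage that the $n$ columns you feed into JL match the $\ln n$ in the hypothesis directly.
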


\begin{proof}
	We first upper bound the Frobenius norm of
	$\ZZ - \ZZtil$ by
	\begin{align*}
		& \norm{\ZZ - \ZZtil}_F^2 =
			\sum\limits_{i=1}^q
			\norm{\ZZ^T\ee_i - \ZZtil^T\ee_i}^2 \\
		\leq &
			\frac{1}{\lambdamin(\SS)}
			\sum\limits_{i=1}^q
			\norm{\ZZ^T\ee_i - \ZZtil^T\ee_i}_{\SS}^2
			\qquad
			\text{since $\lambdamin(\SS)\cdot\II \pleq \SS$}\\
		\leq &
			\frac{\delta^2}{\lambdamin(\SS)}
			\sum\limits_{i=1}^q
			\norm{\ZZ^T\ee_i}_{\SS}^2
			\qquad
			\text{by~(\ref{eq:deltaZ}).}
	\end{align*}
	We then upper bound
	$\sum\limits_{i=1}^q	\norm{\ZZ^T\ee_i}_{\SS}^2$ by
	\begin{align*}
		& \sum\limits_{i=1}^q	\norm{\ZZ^T\ee_i}_{\SS}^2 =
		\sum\limits_{i=1}^q \ee_i^T \QQ \MM \SS^{-1} \MM^T \QQ^T \ee_i
		=
		\trace{\QQ \MM \SS^{-1} \MM^T \QQ^T} \\
		= &
		\norm{\QQ \kh{\MM \SS^{-1} \MM^T}^{1/2}}_F^2
		\leq
		(1 + \eps) \norm{\kh{\MM \SS^{-1} \MM^T}^{1/2}}_F^2
		\qquad \text{by Lemma~\ref{lem:jl}} \\
		= &
		(1 + \eps) \trace{\MM \SS^{-1} \MM^T}
		= (1 + \eps) \trace{\SS^{-1} \MM^T \MM}
		\qquad \text{by cyclicity of trace} \\
		\leq &
		(1 + \eps) \trace{\SS^{-1}} \trace{\MM^T \MM}
		\qquad  \text{by submultiplicativity of PSD's trace.}
	\end{align*}
	Combining the above two upper bounds
	completes the proof.
\end{proof}

\begin{lemma}\label{lem:traces}
	For any nonempty $S\subseteq V$,
	$\trace{\LL_{-S}^{-1}} \leq n^2/\wmin$.
\end{lemma}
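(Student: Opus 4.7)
The plan is to bound $\trace{\LL_{-S}^{-1}}$ termwise by interpreting its diagonal entries as effective resistances to the grounded set $S$, and then bounding each such effective resistance by the cost of any path from the corresponding vertex to $S$.

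First, I would write
\[
\trace{\LL_{-S}^{-1}} = \sum_{u \in V \setminus S} \kh{\LL_{-S}^{-1}}_{[u,u]}
= \sum_{u \in V \setminus S} \er(u,S),
\]
using the identification of $(\LL_{-S}^{-1})_{[u,u]}$ with the effective resistance between $u$ and the grounded vertex set $S$, which was noted right after Definition~\ref{CFCC}. This reduces the problem to uniformly upper bounding $\er(u,S)$ for $u \notin S$.

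Next, I would show $\er(u,S) \leq (n-1)/\wmin$ for every $u \in V \setminus S$. The key fact is Rayleigh's monotonicity: grounding additional vertices (or equivalently, shorting $S$ to a single node) can only decrease effective resistance, so $\er(u,S) \leq \er(u,v)$ for any fixed $v \in S$. Since $G$ is connected, there exists a simple path $P_{uv}$ from $u$ to $v$ with at most $n-1$ edges, and by the standard series upper bound on effective resistance,
\[
\er(u,v) \leq \sum_{e \in P_{uv}} \frac{1}{w(e)} \leq \frac{n-1}{\wmin}.
\]
Combining these two inequalities gives $\er(u,S) \leq (n-1)/\wmin$.

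Finally, I would sum over $u \in V \setminus S$ to obtain
\[
\trace{\LL_{-S}^{-1}} \leq (n - \sizeof{S}) \cdot \frac{n-1}{\wmin} \leq \frac{n^2}{\wmin},
\]
which completes the proof. No step is really an obstacle here; the only thing to be careful about is citing or justifying the series upper bound on effective resistance and Rayleigh monotonicity (both standard) cleanly, since the paper's preliminaries present effective resistance in algebraic form via $\LL^\dag$ rather than via the electrical/variational characterization. If one prefers an algebraic route, the same bound $\er(u,S) \leq (n-1)/\wmin$ follows from $\LL_{-S} \succeq \wmin \cdot \LL^{P_{uv}}_{-S}$ on the relevant coordinates, where $\LL^{P_{uv}}$ is the unweighted path Laplacian.
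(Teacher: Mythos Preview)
Your proof is correct and follows essentially the same approach as the paper: both write $\trace{\LL_{-S}^{-1}} = \sum_{u\in V\setminus S} \er(u,S)$ and then bound each effective resistance via a path to $S$ whose edges each contribute at most $1/\wmin$. The only minor difference is that the paper orders the vertices by distance from $S$ and uses the bound $\er(u_i,S)\leq i/\wmin$ for the $i$-th vertex, giving $\sum_{i=1}^{n-1} i/\wmin$, whereas you bound every term uniformly by $(n-1)/\wmin$; both sums are at most $n^2/\wmin$, so the distinction is immaterial for the stated lemma.
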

\begin{proof}
	We upper bound $\trace{\LL_{-S}^{-1}}$ by
	\begin{align*}
		\trace{\LL_{-S}^{-1}}
		= \sum\limits_{u\in V-S} \er(u,S)
		\leq \sum\limits_{i=1}^{n-1} i\cdot \frac{1}{\wmin}
		\leq \frac{n^2}{\wmin}.
	\end{align*}
\end{proof}

\begin{lemma}\label{lem:lambdas}
	For any nonempty $S\subseteq V$,
	$\lambda_{\mathrm{min}}\kh{\LL_{-S}} \geq w_{\mathrm{min}} / n^2$
	and
	$\lambda_{\mathrm{max}}\kh{\LL_{-S}} \leq n^2 w_{\mathrm{max}} $.
\end{lemma}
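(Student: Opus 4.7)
The plan is to handle the two bounds separately, since they rely on different ideas. For the upper bound on $\lambda_{\max}(\LL_{-S})$, I would use PSD domination by the complete-graph Laplacian and Cauchy interlacing. Specifically, since every edge weight is at most $\wmax$, we have $\LL \preceq \wmax \cdot \LL^{K_n}$, where $\LL^{K_n}$ denotes the Laplacian of the unweighted complete graph on $n$ vertices. A direct calculation (already used in the proof of Lemma~\ref{lem:diaglb}) shows $\lambda_{\max}(\LL^{K_n}) = n$, hence $\lambda_{\max}(\LL) \leq n \wmax$. Because $\LL_{-S}$ is a principal submatrix of $\LL$, Cauchy interlacing gives $\lambda_{\max}(\LL_{-S}) \leq \lambda_{\max}(\LL) \leq n \wmax \leq n^2 \wmax$, which proves the second inequality.

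For the lower bound on $\lambda_{\min}(\LL_{-S})$, my approach is to pass through the already-established trace bound. Since $\LL_{-S}$ is symmetric positive definite (by Fact~\ref{fact:inv+}), its inverse is symmetric positive definite, and we have the standard identity $\lambda_{\min}(\LL_{-S}) = 1/\lambda_{\max}(\LL_{-S}^{-1})$. The maximum eigenvalue of the PSD matrix $\LL_{-S}^{-1}$ is bounded by its trace, so $\lambda_{\max}(\LL_{-S}^{-1}) \leq \trace{\LL_{-S}^{-1}}$. Lemma~\ref{lem:traces} then immediately yields $\trace{\LL_{-S}^{-1}} \leq n^2/\wmin$, so $\lambda_{\min}(\LL_{-S}) \geq \wmin/n^2$.

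Neither step has a real obstacle: the upper bound is a one-line PSD interlacing argument, and the lower bound is essentially a restatement of Lemma~\ref{lem:traces} via the $\lambda_{\max}(A^{-1}) = 1/\lambda_{\min}(A)$ identity. The only thing to be slightly careful about is making sure we invoke Fact~\ref{fact:inv+} to ensure $\LL_{-S}$ is strictly positive definite (so that the smallest eigenvalue is positive and inversion is legal), which holds because $S$ is nonempty and $G$ is connected.
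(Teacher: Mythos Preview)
Your proposal is correct and follows essentially the same approach as the paper: the upper bound via Cauchy interlacing combined with the complete-graph Laplacian comparison, and the lower bound via $\lambda_{\min}(\LL_{-S}) \geq 1/\trace{\LL_{-S}^{-1}}$ together with Lemma~\ref{lem:traces}. Your write-up is in fact slightly more careful, making the $\wmax$ factor explicit and noting that the sharper bound $\lambda_{\max}(\LL) \leq n\wmax$ already suffices.
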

\begin{proof}
	We upper bound $\lambdamax$ using Cauchy interlacing:
	\begin{align*}
		\lambdamax\kh{\LL_{-S}} \leq
		\lambdamax\kh{\LL} \leq
		\lambdamax\kh{\LL^{K_n}}
		\leq
		n^2 w_{\mathrm{max}}.
	\end{align*}
	The lower bound of $\lambdamin$ follows by
	\begin{align*}
		\lambdamin\kh{\LL_{-S}}
		\geq 1 / \trace{\LL_{-S}^{-1}} \geq w_{\mathrm{min}} / n^2.
	\end{align*}
\end{proof}

\begin{lemma}\label{lem:gainsest}
	The routine $\GainsEst$ runs in time $\Otil(m)$.
	For $0\leq \eps \leq 1/2$,
	the $\left\{ g_u \right\}_{u\in V}$ returned by
	$\GainsEst$ satisfies
	\[
		(1-\eps) \frac{\ee_u^T \LL_{-S_i}^{-2} \ee_u}
							{\ee_u^T \LL_{-S_i}^{-1} \ee_u} \leq
		g_u \leq
		(1+\eps) \frac{\ee_u^T \LL_{-S_i}^{-2} \ee_u}
							{\ee_u^T \LL_{-S_i}^{-1} \ee_u}
	\]
	with high probability.
\end{lemma}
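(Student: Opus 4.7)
The plan is to decompose the approximation error in each $g_u$ into two sources, the Johnson--Lindenstrauss compressions and the SDDM solver errors, and then compose them through the ratio. The dimensions $p,q,r$ and tolerances $\delta_1,\delta_2,\delta_3$ in $\GainsEst$ are tuned so that each source contributes at most a $(1\pm\eps/9)$-factor distortion; composing four such factors (two in the numerator's chain, two in the denominator's) will give the stated $(1\pm\eps)$ bound for $\eps\leq 1/2$.

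First, I would isolate the Johnson--Lindenstrauss error. Applying Lemma~\ref{lem:jl} to the vectors $\{\LL_{-S_i}^{-1}\ee_u:u\in V\}$ (together with $\zero$, so that a single vector's norm is captured as a pairwise distance) yields $\|\ZZ^{(1)}\ee_u\|^2 \approx_{1+\eps/9} \ee_u^T\LL_{-S_i}^{-2}\ee_u$ uniformly over $u$ with high probability. For the denominator, the identity in~(\ref{eq:deno}) already splits $\ee_u^T\LL_{-S_i}^{-1}\ee_u$ into $\|\WW'^{1/2}\BB'\LL_{-S_i}^{-1}\ee_u\|^2 + \|\XX^{1/2}\LL_{-S_i}^{-1}\ee_u\|^2$, and applying Lemma~\ref{lem:jl} to each family separately gives $\|\ZZ^{(2)}\ee_u\|^2 + \|\ZZ^{(3)}\ee_u\|^2 \approx_{1+\eps/9} \ee_u^T\LL_{-S_i}^{-1}\ee_u$.

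The second, more delicate step is to bound each columnwise solver error, i.e.\ to show $\|\ZZtil^{(j)}\ee_u\|^2 \approx_{1+\eps/9}\|\ZZ^{(j)}\ee_u\|^2$ for all $u$ and $j\in\{1,2,3\}$. I would invoke Lemma~\ref{lem:error2} with $\SS=\LL_{-S_i}$ and $\MM$ equal to $\II_n$, $\WW'^{1/2}\BB'$, or $\XX^{1/2}$ respectively: the SDDM solver guarantee with tolerance $\delta_j$ supplies the hypothesis (with the lemma's $\delta$ playing the role of $\delta_j^2$), and plugging in $\lambdamin(\LL_{-S_i})\geq \wmin/n^2$ (Lemma~\ref{lem:lambdas}) and $\trace{\LL_{-S_i}^{-1}}\leq n^2/\wmin$ (Lemma~\ref{lem:traces}) yields a Frobenius bound of the form
\[
\|\ZZ^{(j)} - \ZZtil^{(j)}\|_F^2 \;\leq\; O(\delta_j^4)\cdot \mathrm{poly}(n,\wmax,1/\wmin).
\]
Since $\|(\ZZ^{(j)}-\ZZtil^{(j)})\ee_u\|\leq \|\ZZ^{(j)}-\ZZtil^{(j)}\|_F$, combining this with the crude lower bounds $\|\ZZ^{(1)}\ee_u\|^2 \approx \kh{\LL_{-S_i}^{-1}}_{[u,u]}^2 \geq 1/(n^2\wmax)^2$ and $\|\ZZ^{(2)}\ee_u\|^2+\|\ZZ^{(3)}\ee_u\|^2 \approx \kh{\LL_{-S_i}^{-1}}_{[u,u]} \geq 1/(n^2\wmax)$ (both via Lemma~\ref{lem:lambdas}), one checks that the specific choices of $\delta_1,\delta_2,\delta_3$ are engineered so that $\|(\ZZ^{(j)}-\ZZtil^{(j)})\ee_u\|\leq (\eps/27)\|\ZZ^{(j)}\ee_u\|$; squaring via $|a^2-b^2|=|a-b|(a+b)$ then delivers the desired $(1\pm\eps/9)$-approximation of $\|\ZZ^{(j)}\ee_u\|^2$.

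Composing the JL and solver approximations, the numerator $\|\ZZtil^{(1)}\ee_u\|^2$ and the combined denominator lie within $(1\pm\eps/9)^2$ of $\ee_u^T\LL_{-S_i}^{-2}\ee_u$ and $\ee_u^T\LL_{-S_i}^{-1}\ee_u$ respectively; taking the ratio and absorbing $(1\pm\eps/9)^4$ into $1\pm\eps$ for $\eps\leq 1/2$ yields the claimed bound on $g_u$. For the runtime, generating the three Gaussian matrices and forming the sparse products $\QQ\WW'^{1/2}\BB'$ and $\RR\XX^{1/2}$ cost $O((p+q+r)m)$, and each of the $p+q+r=O(\log n/\eps^2)$ SDDM solves runs in $\Otil(m\log(1/\delta_j))$ time; since $\log(1/\delta_j)=O(\log(n\wmax/(\eps\wmin)))$ is polylogarithmic, the total runtime is $\Otil(m)$. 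The main obstacle is the careful bookkeeping through this four-layer composition --- in particular, verifying that the tuned $\delta_j$'s are small enough that the Frobenius error dominates each columnwise target value $\|\ZZ^{(j)}\ee_u\|$ by the required $\eps/27$ factor, uniformly over $u$.
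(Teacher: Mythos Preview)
Your overall architecture matches the paper's proof: decompose into JL error and solver error, invoke Lemma~\ref{lem:error2} with $\MM\in\{\II,\WW'^{1/2}\BB',\XX^{1/2}\}$, feed in the eigenvalue and trace bounds from Lemmas~\ref{lem:traces}--\ref{lem:lambdas}, and compose the $(1\pm\eps/9)$ factors. The running-time argument is also fine.

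There is, however, a genuine gap in your treatment of the denominator. You assert that the tuned $\delta_2,\delta_3$ yield $\|(\ZZ^{(j)}-\ZZtil^{(j)})\ee_u\|\leq(\eps/27)\|\ZZ^{(j)}\ee_u\|$ for each $j\in\{2,3\}$ \emph{individually}. This cannot be verified: the only lower bound available is on the \emph{sum} $\|\ZZ^{(2)}\ee_u\|^2+\|\ZZ^{(3)}\ee_u\|^2\gtrsim 1/(n^2\wmax)$, and each summand on its own can be arbitrarily small or even zero (e.g.\ if $V\setminus S_i$ is a single vertex then $\BB'=0$ and $\|\ZZ^{(2)}\ee_u\|=0$; more generally the off-diagonal entries of $\LL_{-S_i}^{-1}$ entering $\|\XX^{1/2}\LL_{-S_i}^{-1}\ee_u\|$ admit no polynomial lower bound). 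The paper avoids this by never separating the two terms: it bounds
\[
\bigl|\,\|\ZZ^{(2)}\ee_u\|^2+\|\ZZ^{(3)}\ee_u\|^2-\|\ZZtil^{(2)}\ee_u\|^2-\|\ZZtil^{(3)}\ee_u\|^2\,\bigr|
\]
directly, controlling it by $2\max_j\|\ZZ^{(j)}-\ZZtil^{(j)}\|_F$ times $\|\ZZ^{(2)}\ee_u\|+\|\ZZ^{(3)}\ee_u\|\leq\sqrt{2(1+\eps/9)n^2/\wmin}$ (plus a quadratic term), and then compares this \emph{additive} error against the lower bound on the \emph{sum}. You should rewrite the denominator step accordingly. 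A smaller remark: your Frobenius bound reads $O(\delta_j^4)$ because you took the hypothesis of Lemma~\ref{lem:error2} literally; the paper's computation (and the solver guarantee $\|\cdot\|_\SS\leq\delta\|\cdot\|_\SS$, squared) actually gives $\|\ZZ^{(j)}-\ZZtil^{(j)}\|_F^2=O(\delta_j^2)\cdot\mathrm{poly}(n,\wmax,1/\wmin)$, which is what the chosen $\delta_j$'s are calibrated against.
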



\begin{proof}
	It suffices to show that
	\begin{align}
		(1 - \eps/9) \norm{\ZZ^{(1)} \ee_u}^2 \leq
		\norm{\ZZtil^{(1)} \ee_u}^2
		&\leq
		(1 + \eps/9) \norm{\ZZ^{(1)} \ee_u}^2 \label{eq:z1}
	\end{align}
	and
	\begin{align}\label{eq:z3}
	   \norm{\ZZ^{(2)} \ee_u}^2 + \norm{\ZZ^{(3)} \ee_u}^2
	   \approx_{(1 + \eps/9)}
		\norm{\ZZtil^{(2)} \ee_u}^2 + \norm{\ZZtil^{(3)} \ee_u}^2
	\end{align}
	hold for all $u\in V$.
	
	By Lemma~\ref{lem:error2},~\ref{lem:traces}, and~\ref{lem:lambdas},
	and $\trace{\II} = n$,
	$\trace{\BB'^T \WW' \BB'} \leq \trace{\LL} \leq n^2 \wmax$,
	and $\trace{\XX} \leq n \wmax$, we have
	\begin{align}
		\norm{\ZZ^{(1)} - \ZZtil^{(1)}}_F^2 &\leq
		(1 + \eps/9)
		\frac{\delta_1^2 n^5}{\wmin^2}, \label{eq:dfz1} \\
		\norm{\ZZ^{(2)} - \ZZtil^{(2)}}_F^2 &\leq
		(1 + \eps/9)
		\frac{\delta_2^2 n^6 \wmax}{\wmin^2}, \label{eq:dfz2}\\
		\norm{\ZZ^{(3)} - \ZZtil^{(3)}}_F^2 &\leq
		(1 + \eps/9)
		\frac{\delta_3^2 n^5 \wmax}{\wmin^2}. \label{eq:dfz3}
	\end{align}
	By Lemma~\ref{lem:lambdas} and Lemma~\ref{lem:jl} we have
	\begin{align}
		\norm{\ZZ^{(1)} \ee_u}^2 &\geq
		(1 - \eps/9) \ee_u^T \LL_{-S_i}^{-2} \ee_u
		\geq
		\frac{1 - \eps/9}{n^4 \wmax^2}, \\
		\norm{\ZZ^{(2)} \ee_u}^2 + \norm{\ZZ^{(3)} \ee_u}^2
		&\geq
		(1 - \eps/9) \ee_u^T \LL_{-S_i}^{-1} \ee_u
		\geq \frac{1 - \eps/9}{n^2 \wmax} \label{eq:lbz23}.
	\end{align}
	We upper bound $\abs{\norm{\ZZ^{(1)}\ee_u} - \norm{\ZZtil^{(1)}\ee_u}}$
	by
	\begin{align*}
		\abs{\norm{\ZZ^{(1)}\ee_u} - \norm{\ZZtil^{(1)}\ee_u}}
		\leq &\norm{ \kh{ \ZZ^{(1)} - \ZZtil^{(1)} } \ee_u }
		\qquad \text{by the triangle inequality} \\
		\leq &
		\norm{\ZZ^{(1)} - \ZZtil^{(1)}}_F
		\leq
		\sqrt{(1 + \eps/9)
		\frac{\delta_1^2 n^5}{\wmin^2}}
		\leq
		(\eps / 27) \norm{\ZZ^{(1)} \ee_u},
	\end{align*}
	which gives
	\begin{align*}
		& \abs{\norm{\ZZ^{(1)}\ee_u }^2 - \norm{\ZZtil^{(1)}\ee_u}^2} \\
		= &\abs{\norm{\ZZ^{(1)}\ee_u } - \norm{\ZZtil^{(1)}\ee_u}}
			\kh{\norm{\ZZ^{(1)}\ee_u } + \norm{\ZZtil^{(1)}\ee_u}} \\
		\leq &(\eps / 27) \norm{\ZZ^{(1)}\ee_u }
			(2 + \eps / 27) \norm{\ZZ^{(1)}\ee_u }
		\leq (1 + \eps/9) \norm{\ZZ^{(1)}\ee_u }^2.
	\end{align*}
	We then upper bound
	the difference between
	$\norm{\ZZ^{(2)}\ee_u }^2
	+ \norm{\ZZ^{(3)}\ee_u }^2$
	and
	$\norm{\ZZtil^{(2)}\ee_u}^2 + \norm{\ZZtil^{(3)}\ee_u}^2$ by
	\begin{align}
		& \abs{ \norm{\ZZ^{(2)}\ee_u }^2 + \norm{\ZZ^{(3)}\ee_u }^2 -
		\norm{\ZZtil^{(2)}\ee_u}^2 - \norm{\ZZtil^{(3)}\ee_u}^2} \\
		\leq &
		\abs{ \norm{\ZZ^{(2)}\ee_u }^2 - \norm{\ZZtil^{(2)}\ee_u}^2 } +
		\abs{ \norm{\ZZ^{(3)}\ee_u }^2 - \norm{\ZZtil^{(3)}\ee_u}^2 } \notag\\
		= &
			\abs{\norm{\ZZ^{(2)}\ee_u } - \norm{\ZZtil^{(2)}\ee_u}}
			\kh{\norm{\ZZ^{(2)}\ee_u } + \norm{\ZZtil^{(2)}\ee_u}} + \notag\\
			& \abs{\norm{\ZZ^{(3)}\ee_u } - \norm{\ZZtil^{(3)}\ee_u}}
			\kh{\norm{\ZZ^{(3)}\ee_u } + \norm{\ZZtil^{(3)}\ee_u}} \notag\\
		\leq &
		2\max\left\{
				\abs{\norm{\ZZ^{(2)}\ee_u } - \norm{\ZZtil^{(2)}\ee_u}},
				\abs{\norm{\ZZ^{(3)}\ee_u } - \norm{\ZZtil^{(3)}\ee_u}}
			\right\}^2 + \notag\\
		& 2\max\left\{
				\abs{\norm{\ZZ^{(2)}\ee_u } - \norm{\ZZtil^{(2)}\ee_u}},
				\abs{\norm{\ZZ^{(3)}\ee_u } - \norm{\ZZtil^{(3)}\ee_u}}
			\right\}\cdot
			\kh{ \norm{\ZZ^{(2)}\ee_u } + \norm{\ZZ^{(3)}\ee_u } }.
				\label{eq:qq}
	\end{align}
	Using Cauchy-Schwarz and
	Lemma~\ref{lem:jl},
	we can upper bound $\norm{\ZZ^{(2)}\ee_u } + \norm{\ZZ^{(3)}\ee_u }$
	by
	\begin{align*}
		&\norm{\ZZ^{(2)}\ee_u } + \norm{\ZZ^{(3)}\ee_u }
		\leq \sqrt{2\kh{\norm{\ZZ^{(2)}\ee_u }^2 + \norm{\ZZ^{(3)}\ee_u}^2}}\\
		\leq &
		\sqrt{2(1 + \eps/9) \ee_u^T \LL_{-S_i} \ee_u}
		\leq
		\sqrt{2(1 + \eps/9) \frac{n^2}{\wmin}}.
	\end{align*}
	Combining this upper bound
	with~(\ref{eq:dfz2}),~(\ref{eq:dfz3}),~(\ref{eq:lbz23})
	and the values of $\delta_2$ and $\delta_3$
	gives
	\begin{align*}
		& \abs{ \norm{\ZZ^{(2)}\ee_u }^2 + \norm{\ZZ^{(3)}\ee_u }^2 -
		\norm{\ZZtil^{(2)}\ee_u}^2 - \norm{\ZZtil^{(3)}\ee_u}^2}
		\leq
		(\eps/9)
		\kh{ \norm{\ZZ^{(2)}\ee_u }^2 + \norm{\ZZ^{(3)}\ee_u }^2 },
	\end{align*}
	which completes the proof.
\end{proof}

We now give the $\Otil(mk)$-time greedy algorithm
as follows.
\begin{algbox}
	$S_k = \FastGreedy\kh{G,\LL,k, \eps}$
	
	\quad
	\begin{enumerate}
		\item $\left\{ r_u \right\}_{u\in V} \gets
			\ERSumEst\kh{G,\LL, \eps/3}$
		\item $S_1 \gets \setof{u_1}$ where
			$
				u_1 = \argmin_{u\in V}\nolimits
					r_u.
			$
		\item Repeat the following steps for $i = 1,\ldots, k-1$:
			\begin{enumerate}
				\item $\left\{ g_u \right\}_{u\in V} \gets
					\GainsEst\kh{G,\LL, S_i, \eps/2}$
				\item $S_{i+1} \gets S_i + u_{i+1}$ where
					$
						u_{i+1} = \argmax_{u\in (V\setminus S_i)}\nolimits
							g_u.
					$
			\end{enumerate}
		\item Return $S_k$.
	\end{enumerate}
\end{algbox}

The performance of $\FastGreedy$ is characterized
in the following theorem.

\begin{theorem}
	The algorithm $S_k = \FastGreedy\kh{G,\LL,k,\eps}$
	takes an undirected positively weighted graph $G = (V,E,w)$
	with associated Laplacian $\LL$, an integer $2\leq k\leq n$,
	and an error parameter $0 < \eps \leq 1/2$,
	and returns a vertex set $S_k\subseteq V$ with $\sizeof{S_k} = k$.
	The algorithm runs in time $\Otil(k m)$.
	With high probability, the vertex set $S_k$ satisfies
	\begin{align}\label{eq:1/e+eps}
		(1 + \eps) \trace{\LL_{-u^*}^{-1}} -
		\trace{\LL_{-S_k}^{-1}}
		\geq
		\kh{1 - \frac{k}{k-1}\cdot \frac{1}{e} - \eps}
		\kh{(1 + \eps)\trace{\LL_{-u^*}^{-1}} -
		\trace{\LL_{-S^*}^{-1}}
		},
	\end{align}
	where
	$S^* \defeq
		\argmin\limits_{\sizeof{S}\leq k}
		\trace{\LL_{-S}^{-1}}$
	and
	$u^* \defeq \argmin\limits_{u\in V}
		\sum\limits_{v\in V}\er(u,v)$.
\end{theorem}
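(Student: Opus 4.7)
The plan is to follow the standard Nemhauser--Wolsey--Fisher greedy analysis for supermodular minimization (equivalently, submodular maximization) under a cardinality constraint, but to carefully track how the two sources of approximation error --- the approximate initial vertex from $\ERSumEst$ and the approximate marginal gains from $\GainsEst$ --- propagate into the final bound. The running time is immediate: $\ERSumEst$ is invoked once and $\GainsEst$ is invoked $k-1$ times, each in $\Otil(m)$ time by Lemmas~\ref{lem:ersumsest} and~\ref{lem:gainsest}, giving $\Otil(km)$ overall.

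For the initial step, I would use Lemma~\ref{lem:ersumsest} with parameter $\eps/3$ to conclude that $r_u \Approx{1+\eps/3}\sum_{v\in V}\er(u,v) = \trace{\LL_{-u}^{-1}}$ for every $u$, and then apply the ``swap'' argument $\trace{\LL_{-u_1}^{-1}} \leq r_{u_1}/(1-\eps/3) \leq r_{u^*}/(1-\eps/3) \leq \frac{1+\eps/3}{1-\eps/3}\trace{\LL_{-u^*}^{-1}}$, which is bounded by $(1+\eps)\trace{\LL_{-u^*}^{-1}}$ for $\eps \leq 1/2$. This explains the $(1+\eps)\trace{\LL_{-u^*}^{-1}}$ term appearing on both sides of~(\ref{eq:1/e+eps}).

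For the greedy iterations, let $g^*_u \defeq (\ee_u^T\LL_{-S_i}^{-2}\ee_u)/(\ee_u^T\LL_{-S_i}^{-1}\ee_u)$ denote the true marginal gain. Lemma~\ref{lem:gainsest} with parameter $\eps/2$ gives $g_u \Approx{1+\eps/2} g^*_u$, and since $u_{i+1}$ maximizes $g_u$ the standard swap argument yields $g^*_{u_{i+1}} \geq \frac{1-\eps/2}{1+\eps/2}\max_u g^*_u \geq (1-\eps)\max_u g^*_u$. Meanwhile, Theorem~\ref{thm:tracesup} (supermodularity) combined with $\sum_{u\in S^*\setminus S_i}[\trace{\LL_{-S_i}^{-1}}-\trace{\LL_{-(S_i+u)}^{-1}}] \geq \trace{\LL_{-S_i}^{-1}}-\trace{\LL_{-S^*}^{-1}}$ forces $\max_u g^*_u \geq \frac{1}{k}(\trace{\LL_{-S_i}^{-1}}-\trace{\LL_{-S^*}^{-1}})$. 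Setting $\Delta_i \defeq \trace{\LL_{-S_i}^{-1}} - \trace{\LL_{-S^*}^{-1}}$, this gives the recursion $\Delta_{i+1} \leq (1-(1-\eps)/k)\Delta_i$, and iterating from $i=1$ to $k-1$ yields $\Delta_k \leq (1-(1-\eps)/k)^{k-1}\Delta_1$.

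Finally, I would combine everything via a clean scalar inequality: writing $(1-(1-\eps)/k)^{k-1} = (1-1/k)^{k-1}(1+\frac{\eps/k}{1-1/k})^{k-1} \leq \frac{k}{k-1}\cdot\frac{1}{e}\cdot e^{\eps}$ and bounding $e^\eps \leq 1+2\eps$ for $\eps \leq 1/2$, then using $\Delta_1 \leq (1+\eps)\trace{\LL_{-u^*}^{-1}} - \trace{\LL_{-S^*}^{-1}}$ from the first step to obtain the inequality claimed, modulo rescaling $\eps$ by a small constant which is absorbed into the $\eps$-dependence of the lemmas. The main obstacle is keeping the constants clean: the $(1+\eps)$ slack from the approximate first-vertex step, the $(1-\eps)$ factor per greedy step, and the passage from multiplicative $(1\pm O(\eps))$ approximation in the Nemhauser--Wolsey--Fisher recursion to the additive $-\eps$ in the final approximation ratio all interact, so the bookkeeping must be done carefully (and it is precisely this which justifies the choices $\eps/3$ and $\eps/2$ in the calls to $\ERSumEst$ and $\GainsEst$).
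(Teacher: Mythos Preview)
Your proposal is correct and follows essentially the same route as the paper's proof: the running time comes directly from Lemmas~\ref{lem:ersumsest} and~\ref{lem:gainsest}; the swap argument on $\ERSumEst$ with parameter $\eps/3$ gives $\trace{\LL_{-S_1}^{-1}}\leq (1+\eps)\trace{\LL_{-u^*}^{-1}}$; the swap argument on $\GainsEst$ with parameter $\eps/2$ gives a $(1-\eps)$-approximate greedy step, which combined with supermodularity yields the recursion $\Delta_{i+1}\leq(1-(1-\eps)/k)\Delta_i$; and iterating $k-1$ times produces the stated bound. The only cosmetic difference is in how you bound $(1-(1-\eps)/k)^{k-1}$: the paper passes directly through $\frac{k}{k-1}\cdot e^{-(1-\eps)}\leq\frac{k}{k-1}(\frac{1}{e}+\eps)$, whereas you factor it as $(1-1/k)^{k-1}(1+\frac{\eps}{k-1})^{k-1}\leq\frac{k}{k-1}\cdot\frac{1}{e}\cdot e^{\eps}$ --- both are valid and yield the same conclusion up to the harmless constant in front of $\eps$ that you (and implicitly the paper) absorb.
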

\begin{proof}
	
	The running time is easy to verify.
	
	The running time follows by the running times in
	Lemma~\ref{lem:ersumsest} and~\ref{lem:gainsest}.
	
	We now prove the approximation ratio.
	The main difference between $\FastGreedy$ and $\exactGreedy$
	is that at each step,\\
	$\exactGreedy$ picks a vertex with maximum
	marginal gain, while $\FastGreedy$ picks a vertex
	with at least $\frac{1 - \eps/2}{1 + \eps/2} \geq 1 - \eps$
	times maximum marginal gain.
	By supermodularity we have for any $i\geq 1$
	\begin{align*}
		\trace{\LL_{-S_i}^{-1}} -
		\trace{\LL_{-S_{i+1}}^{-1}}
		\geq \frac{1 - \eps}{k}
		\kh{\trace{\LL_{-S_i}^{-1}} -
			\trace{\LL_{-S*}^{-1}}},
	\end{align*}
	which implies
	\begin{small}
	\begin{align*}
		\trace{\LL_{-S_{i+1}}^{-1}} -
			\trace{\LL_{-S^*}^{-1}} \leq
			\kh{1 - \frac{1 - \eps}{k}}
			\kh{ \trace{\LL_{-S_{i}}^{-1}} -
			\trace{\LL_{-S^*}^{-1}}}.
	\end{align*}
	\end{small}
	Then, we have
	\begin{align*}
	&\quad	\trace{\LL_{-S_{k}}^{-1}} - \trace{\LL_{-S^*}^{-1}}\\ &\leq
		\kh{1 - \frac{1 - \eps}{k}}^{k-1}
		\kh{ \trace{\LL_{-S_{1}}^{-1}} - \trace{\LL_{-S^*}^{-1}}} \\
		&\leq
		\frac{k}{k-1} \cdot \frac{1}{e^{(1-\eps)}}
		\kh{ \trace{\LL_{-S_{1}}^{-1}} - \trace{\LL_{-S^*}^{-1}}} \\
		&\leq
		\frac{k}{k-1} \cdot \kh{\frac{1}{e} + \eps}
		\kh{ \trace{\LL_{-S_{1}}^{-1}} - \trace{\LL_{-S^*}^{-1}}}.
	\end{align*}
	By Lemma~\ref{lem:ersumsest}, we obtain
	\[
		\trace{\LL_{-S_{1}}^{-1}} \leq
		 \frac{1 + \eps/3}{1 - \eps/3}
		 \trace{\LL_{-u^*}^{-1}}
		 \leq
		 \kh{1 + \eps}\trace{\LL_{-u^*}^{-1}},
	\]
	which together with the above inequality
	implies~(\ref{eq:1/e+eps}).
\end{proof}

\section{Experiments}\label{sec:experiments}

\begin{table}[H]
\setlength{\abovecaptionskip}{0.pt}
\caption{Information of datasets. For a network with $n$ vertices and $m$ edges, we use
$n'$ and $m'$ to denote the number of vertices and
edges in its largest connected component, respectively. }\label{SetNo}
\begin{center}
\normalsize
\resizebox{0.6\columnwidth}{!}{
\begin{tabular}{ccccc}
\Xhline{2.5\arrayrulewidth}
\raisebox{-0.5ex}{Network} & \raisebox{-0.5ex}{$n$} & \raisebox{-0.5ex}{$m$} & \raisebox{-0.5ex}{$n'$} & \raisebox{-0.5ex}{$m'$} \\[0.5ex]
\hline
\raisebox{-0.5ex}{Zachary karate club} & \raisebox{-0.5ex}{34} & \raisebox{-0.5ex}{78} & \raisebox{-0.5ex}{34} & \raisebox{-0.5ex}{78} \\[0.5ex]
\raisebox{-0.5ex}{Windsufers} & \raisebox{-0.5ex}{43} & \raisebox{-0.5ex}{336} & \raisebox{-0.5ex}{43} & \raisebox{-0.5ex}{336} \\[0.5ex]

\raisebox{-0.5ex}{Contiguous USA} & \raisebox{-0.5ex}{49} & \raisebox{-0.5ex}{107} & \raisebox{-0.5ex}{49} & \raisebox{-0.5ex}{107} \\[0.5ex]

\raisebox{-0.5ex}{Barab\'asi-Albert} & \raisebox{-0.5ex}{50} & \raisebox{-0.5ex}{94} & \raisebox{-0.5ex}{50} & \raisebox{-0.5ex}{94} \\[0.5ex]
\raisebox{-0.5ex}{Watts-Strogatz} & \raisebox{-0.5ex}{50} & \raisebox{-0.5ex}{100} & \raisebox{-0.5ex}{50} & \raisebox{-0.5ex}{100} \\[0.5ex]

\raisebox{-0.5ex}{Erd{\"o}s-R{\'e}nyi} & \raisebox{-0.5ex}{50} & \raisebox{-0.5ex}{95} & \raisebox{-0.5ex}{50} & \raisebox{-0.5ex}{95} \\[0.5ex]

\raisebox{-0.5ex}{Regular ring lattice} & \raisebox{-0.5ex}{50} & \raisebox{-0.5ex}{100} & \raisebox{-0.5ex}{50} & \raisebox{-0.5ex}{100} \\[0.5ex]

\raisebox{-0.5ex}{Dolphins} & \raisebox{-0.5ex}{62} & \raisebox{-0.5ex}{159} & \raisebox{-0.5ex}{62} & \raisebox{-0.5ex}{159} \\[0.5ex]

\raisebox{-0.5ex}{David Copperfield} & \raisebox{-0.5ex}{112} & \raisebox{-0.5ex}{425} & \raisebox{-0.5ex}{112} & \raisebox{-0.5ex}{425} \\[0.5ex]
\raisebox{-0.5ex}{Jazz musicians} & \raisebox{-0.5ex}{198} & \raisebox{-0.5ex}{2742} & \raisebox{-0.5ex}{195} & \raisebox{-0.5ex}{1814} \\[0.5ex]
\raisebox{-0.5ex}{Virgili} & \raisebox{-0.5ex}{1,133} & \raisebox{-0.5ex}{5,451} & \raisebox{-0.5ex}{1,133} & \raisebox{-0.5ex}{5,451} \\[0.5ex]
\raisebox{-0.5ex}{Euroroad} & \raisebox{-0.5ex}{1,174} & \raisebox{-0.5ex}{1,417} & \raisebox{-0.5ex}{1,039} & \raisebox{-0.5ex}{1,305} \\[0.5ex]

\raisebox{-0.5ex}{Protein} & \raisebox{-0.5ex}{1,870} & \raisebox{-0.5ex}{2,277} & \raisebox{-0.5ex}{1,458} & \raisebox{-0.5ex}{1,948} \\[0.5ex]
\raisebox{-0.5ex}{Hamster full} & \raisebox{-0.5ex}{2,426} & \raisebox{-0.5ex}{16,631} & \raisebox{-0.5ex}{2,000} & \raisebox{-0.5ex}{16,098} \\[0.5ex]
\raisebox{-0.5ex}{ego-Facebook} & \raisebox{-0.5ex}{2,888} & \raisebox{-0.5ex}{2,981} & \raisebox{-0.5ex}{2,888} & \raisebox{-0.5ex}{2,981} \\[0.5ex]

\raisebox{-0.5ex}{Vidal} & \raisebox{-0.5ex}{3,133} & \raisebox{-0.5ex}{6,726} & \raisebox{-0.5ex}{2,783} & \raisebox{-0.5ex}{6,007} \\[0.5ex]
\raisebox{-0.5ex}{Powergrid} & \raisebox{-0.5ex}{4,941} & \raisebox{-0.5ex}{6,594} & \raisebox{-0.5ex}{4,941} & \raisebox{-0.5ex}{6,594} \\[0.5ex]

\raisebox{-0.5ex}{Reactome} & \raisebox{-0.5ex}{6,327} & \raisebox{-0.5ex}{147,547} & \raisebox{-0.5ex}{5,973} & \raisebox{-0.5ex}{145,778} \\[0.5ex]
%
%

\raisebox{-0.5ex}{ca-HepTh} & \raisebox{-0.5ex}{9,877} & \raisebox{-0.5ex}{25,998} & \raisebox{-0.5ex}{8,638} & \raisebox{-0.5ex}{24,806} \\[0.5ex]

\raisebox{-0.5ex}{PG-Privacy} & \raisebox{-0.5ex}{10,680} & \raisebox{-0.5ex}{24,316} & \raisebox{-0.5ex}{10,680} & \raisebox{-0.5ex}{24,316} \\[0.5ex]

\raisebox{-0.5ex}{CAIDA} & \raisebox{-0.5ex}{26,475} & \raisebox{-0.5ex}{53,381} & \raisebox{-0.5ex}{26,475} & \raisebox{-0.5ex}{53,381} \\[0.5ex]
\raisebox{-0.5ex}{ego-Twitter} & \raisebox{-0.5ex}{81,306} & \raisebox{-0.5ex}{1,342,296} & \raisebox{-0.5ex}{81,306} & \raisebox{-0.5ex}{1,342,296} \\[0.5ex]

\raisebox{-0.5ex}{com-DBLP} & \raisebox{-0.5ex}{317,080} & \raisebox{-0.5ex}{1,049,866} & \raisebox{-0.5ex}{317,080} & \raisebox{-0.5ex}{1,049,866} \\[0.5ex]

\raisebox{-0.5ex}{roadNet-PA} & \raisebox{-0.5ex}{1,087,562} & \raisebox{-0.5ex}{1,541,514} & \raisebox{-0.5ex}{1,087,562} & \raisebox{-0.5ex}{1,541,514} \\[0.5ex]

\raisebox{-0.5ex}{com-Youtube} & \raisebox{-0.5ex}{1,134,890} & \raisebox{-0.5ex}{2,987,624} & \raisebox{-0.5ex}{1,134,890} & \raisebox{-0.5ex}{2,987,624} \\[0.5ex]
\raisebox{-0.5ex}{roadNet-TX} & \raisebox{-0.5ex}{1,379,917} & \raisebox{-0.5ex}{1,921,660} & \raisebox{-0.5ex}{1,351,137} & \raisebox{-0.5ex}{1,879,201} \\[0.5ex]

\raisebox{-0.5ex}{roadNet-CA} & \raisebox{-0.5ex}{1,965,206} & \raisebox{-0.5ex}{2,766,607} & \raisebox{-0.5ex}{1,957,027} & \raisebox{-0.5ex}{2,760,388} \\[0.5ex]

\Xhline{2.5\arrayrulewidth}
\end{tabular}
}
\end{center}
\end{table}

In this section, we study the performance of our algorithms
by conducting
experiments on some classic
network models and real-world networks
taken from
KONECT~\cite{kunegis2013konect} and
SNAP~\cite{snapnets}.
We run our experiments on the largest components of these networks.
Related information of these networks
is shown in Table~\ref{SetNo},
where networks are shown in increasing order of their
numbers of vertices.


%

We implement our algorithms in Julia to facilitate
interactions with the SDDM solver contained in the
Laplacian.jl package\footnote{https://github.com/danspielman/Laplacians.jl}.
All of our experiments were run on
a Linux box with 4.2 GHz Intel i7-7700 CPU and 32G memory,
using a single thread.


\subsection{Accuracy of routine $\GainsEst$}

\begin{figure}[H]
\vspace{-11.5pt}
\centering
\includegraphics[width=0.7\textwidth]{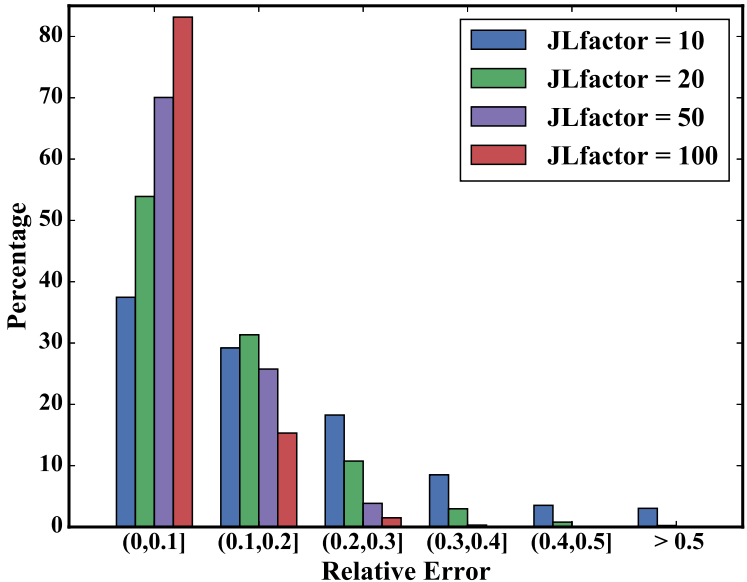}
\caption{Error distribution for routine $\GainsEst$ with
different JLfactors on network ca-HepTh.\label{fig:errors}}
\end{figure}

We first show the accuracy of routine $\GainsEst$.
We use quantity $\mathrm{JLfactor}$ to denote $q / \ln n$ in $\GainsEst$.
We run $\GainsEst$ with different $\mathrm{JLfactor}$s on network
arXiv High Energy Physics - Theory collaboration network (ca-HepTh).
And we set $S_i = \setof{u^*}$ where
$u^* \defeq \argmin\nolimits_{u\in V} 	\sum\nolimits_{v\in V}\er(u,v)$.
For each $\mathrm{JLfactor}$, we compute the relative error of gain
of each vertex $u\neq u^{*}$, given by
\[\frac{\abs{g_u - \kh{\ee_u^T \LL_{-u^*}^{-2} \ee_u} / \kh{\ee_u^T \LL_{-u^*}^{-1} \ee_u}}}
	{\kh{\ee_u^T \LL_{-u^*}^{-2} \ee_u} / \kh{\ee_u^T \LL_{-u^*}^{-1} \ee_u}}. \]
We then draw the distribution of relative errors with different $\mathrm{JLfactor}$s
in a histogram.
The results are shown in Figure~\ref{fig:errors}.
We observe that the errors become small when $\mathrm{JLfactor}$ gets larger.
Moreover, almost all errors are in the range $(0,0.5]$ when $\mathrm{JLfactor} \geq 20$.

We set $\mathrm{JLfactor} = 20$ in all other experiments.
We will show that this is sufficient for our greedy algorithm
to obtain good solutions empirically.

\begin{figure}[h]
\centering
\includegraphics[width=0.7\textwidth]{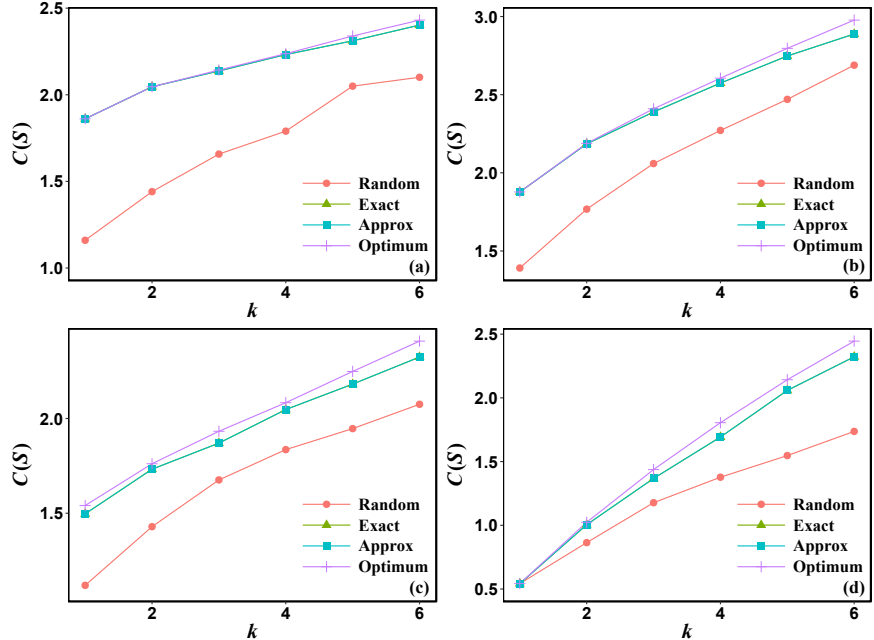}
\caption{Current flow closeness of
vertex sets returned by
$\exactGreedy$, $\FastGreedy$,
random and  optimum strategies  on four models:
BA (a), WS  (b), ER (c),
and a regular ring lattice (d).\label{ModelsOpt}}
\end{figure}

\subsection{Effectiveness of Greedy Algorithms}

\begin{figure}[h]
\centering
\includegraphics[width=0.7\textwidth]{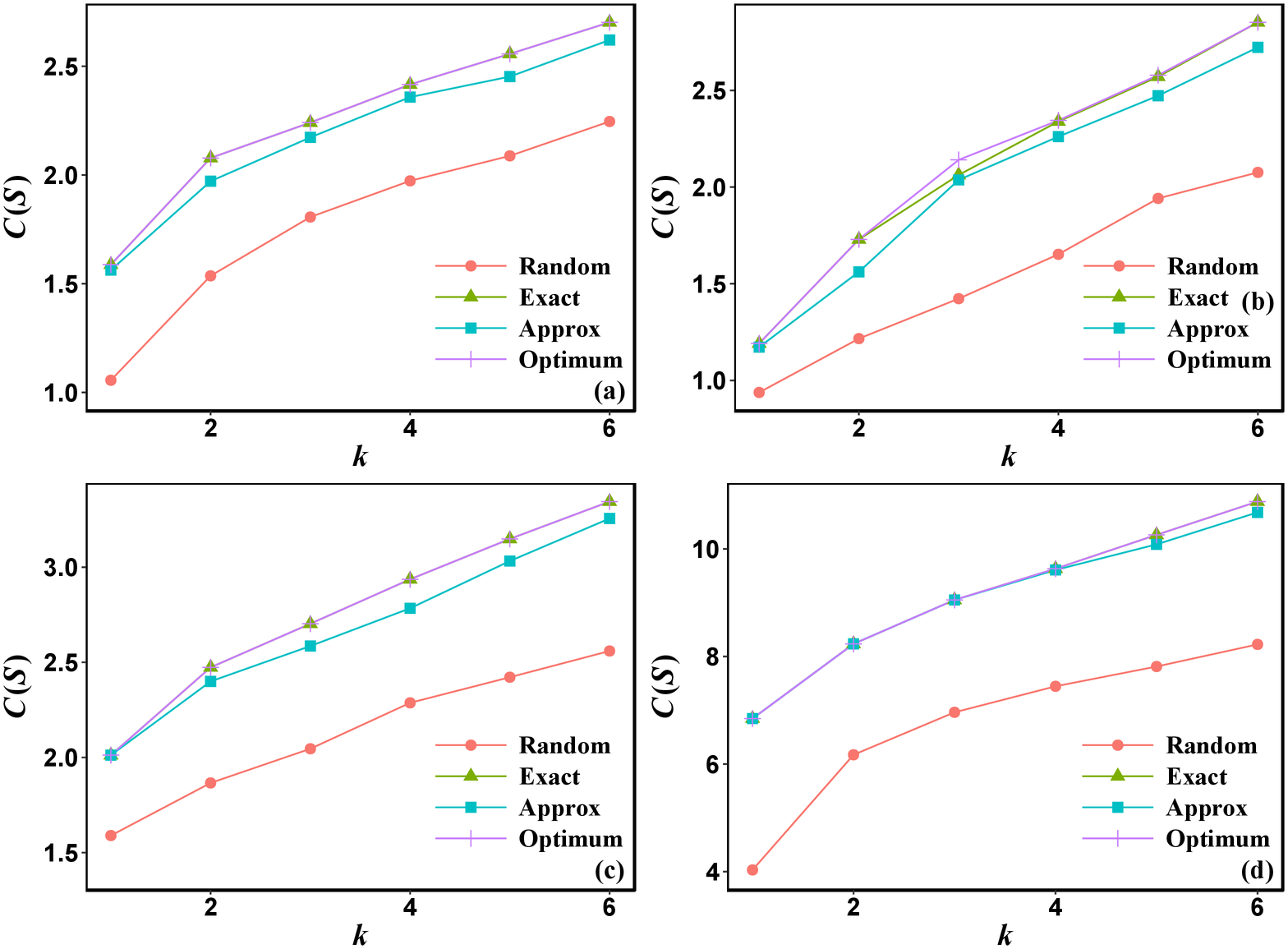}
\caption{Current flow closeness of
vertex sets returned by
$\exactGreedy$, $\FastGreedy$,
random and  optimum strategies  on four networks:
Dolphins (a), 	Contiguous USA (b), Zachary karate club (c),
and Windsurfers (d).\label{NetsOpt}}
\end{figure}

We show the effectiveness of our algorithms
by comparing the results of our algorithms with
the optimum solutions on four small model networks
(Barab\'asi-Albert (BA)~\cite{BaAl99}, Watts-Strogatz (WS)~\cite{WaSt98},
Erd{\"o}s-R{\'e}nyi (ER)~\cite{ER59}, and a regular ring lattice~\cite{WaSt98})
and four small realistic networks (Dolphins, Contiguous USA, Zachary karate club,
and Windsurfers).
We are able to compute the optimum solutions on these
networks because of their small sizes.

For each $k=1,2,\ldots,6$,
we first find the set of $k$ vertices
with the optimum current flow closeness by
brute-force search.
We then compute the current flow closeness of
the vertex sets
returned by $\exactGreedy$ and $\FastGreedy$.
Also, we compute the solutions returned by
the random scheme,
which chose $k$ vertices uniformly at random.
The results are shown in Figure~\ref{ModelsOpt} and~\ref{NetsOpt}.
We observe that
the current flow closenesses of
the solutions returned by our two greedy algorithms
and the optimum solution
are almost the same.
This means that the approximation ratios
of our greedy algorithms are  significantly  better
than their theoretical guarantees.
Moreover, the solutions returned by our greedy algorithms
are much better than those returned by the random scheme.


We then demonstrate the effectiveness of our
algorithms further by comparing it to the random scheme,
as well as two other schemes,
Top-degree and Top-cent,
on six larger networks.
Here the Top-degree scheme chooses $k$ vertices with
highest degrees, while
the Top-cent scheme chooses $k$ vertices with
highest current flow closeness centrality
according to Definition~\ref{def:ccsingle}.
A comparison of  results for these
five algorithms is shown in Figure~\ref{ComBase}.
We observe that our two greedy algorithms
obtain similar approximation ratios,
and both outperform the other three schemes
(random, Top-degree, and Top-cent).

\begin{figure}[h]
\centering
\includegraphics[width=.7\textwidth]{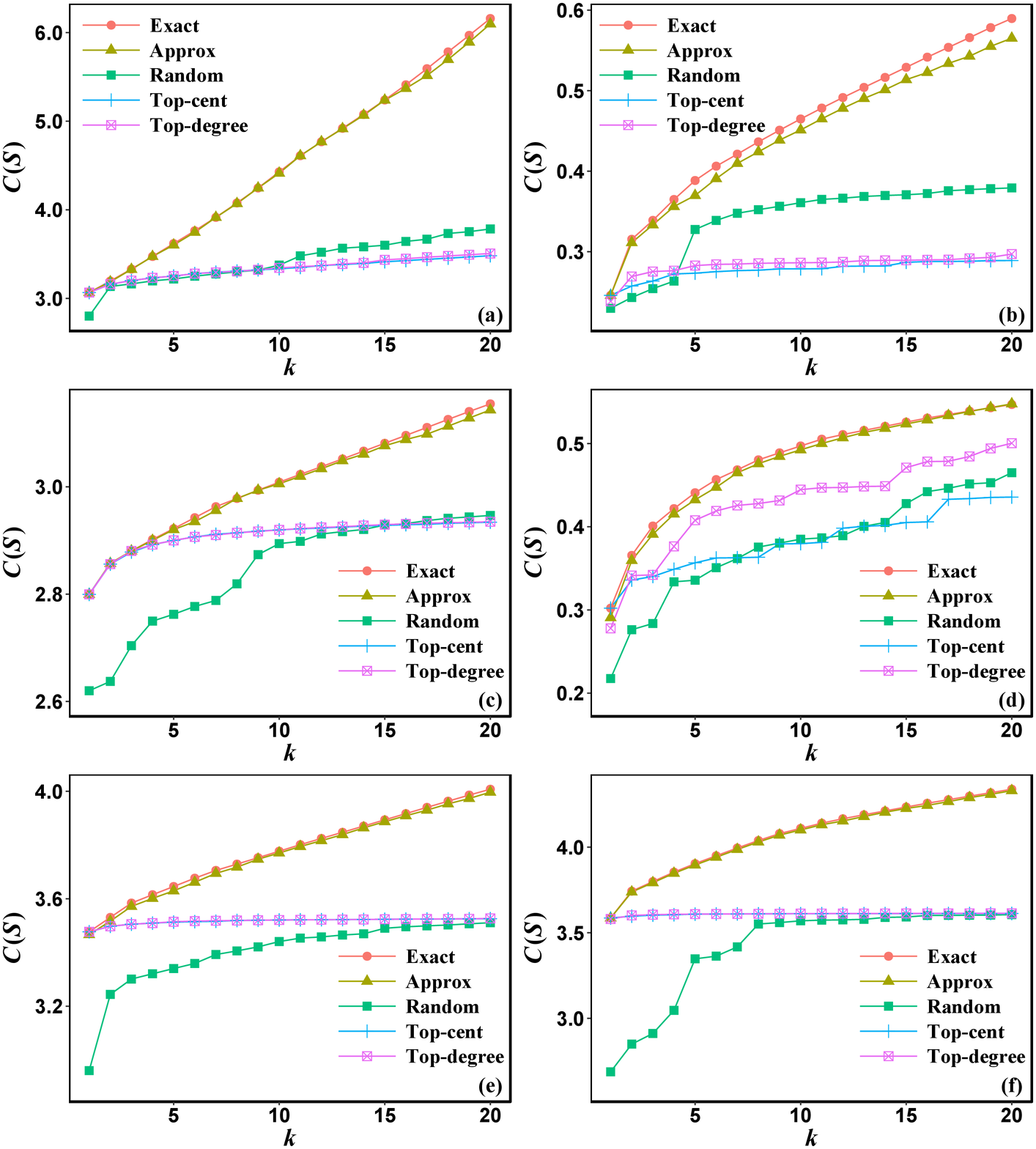}
\caption{Current flow closeness of vertex sets returned versus the number $k$ of vertices chosen for the five algorithms on David Copperfield (a),
Euroroad (b), U. Rovira i Virgili (c), US power grid (d), Hamsterster full (e), and Reactome (f).\label{ComBase} }
\end{figure}

\subsection{Efficiency of Greedy Algorithms}

We now show that
algorithm $\FastGreedy$
runs much faster than $\exactGreedy$,
especially on large-scale networks.
We run our two greedy algorithms
on a larger set of real-world networks.
For each network, we use both greedy algorithms
to choose $k=10$ vertices,
and then compare their running times
as well as the solutions they returned.
For networks with more than 30000 vertices,
we compute the current flow closeness of the vertex set
returned by $\FastGreedy$ by combining
Fast Laplacian solvers and Hutchinson's trace estimation~\cite{Hut89,AvTo11}.
We list the results in
Table~\ref{tab:performance_comparison}.
We observe that the ratio of the
running time of $\exactGreedy$ to that of
$\FastGreedy$ increases rapidly when
the network size becomes larger,
while the vertex sets they returned
have approximately the same current flow closeness.
Moreover, $\FastGreedy$ is able to solve
current flow closeness maximization for
networks with more than $10^6$ vertices
in a few hours,
whereas $\exactGreedy$ fails
because of its high time and space complexity.

%
%

\section{Conclusions}

In this paper, we extended the notion of current flow  closeness centrality (CFCC) to a group of vertices. For a vertex group $S$ in an $n$-vertex graph with $m$ edges, its CFCC $C(S)$ equals the ratio of $n$ to the sum of effective resistances between $S$ and all other vertices. We then considered the problem of finding the set $S^*$ of $k$ vertices with an aim to maximize $C(S^*)$, and solved it by considering an equivalent problem of minimizing $\trace{\LL_{-S}^{-1}}/n$. We showed that the problem is NP-hard, and proved that the objective function is monotone and supermodular. We devised two approximation algorithms for minimizing $\trace{\LL_{-S}^{-1}}/n$ by iteratively selected $k$ vertices in a greedy way. The first one achieves a $(1-\frac{k}{k-1}\cdot \frac{1}{e})$ approximation ratio in time $O(n^3)$; while the second one obtains a $(1-\frac{k}{k-1}\cdot\frac{1}{e}-\eps)$ approximation factor in time $\Otil (mk\eps^{-2})$. We conducted extensive experiments on model and realistic networks, the results of which show that both algorithms can often give  almost optimal solutions. In particular, our second algorithm is able to scale to huge networks, and quickly gives good approximate solutions in networks with more than $10^6$ vertices. In future works, we plan to introduce and study the betweenness group centrality based on current flow~\cite{Ne05}, which takes into account all possible paths.

\begin{table}[h]
\setlength{\abovecaptionskip}{5.pt}
\setlength{\belowcaptionskip}{-0.cm}
  \centering
  \fontsize{8}{8}\selectfont
  \begin{threeparttable}
  \caption{The average running times and results of
  $\exactGreedy$ (Exact) and
  $\FastGreedy$ (Approx) algorithms on a larger set of
real-world networks, as well as the ratios of running
times of $\exactGreedy$ to those of $\FastGreedy$,
and ratios of current-flow closenesses
achieved by $\FastGreedy$
to those achieved by $\exactGreedy$. \label{time}}
  \label{tab:performance_comparison}
    \begin{tabular}{ccccccc}
    \toprule
    \multirow{2}{*}{Network}&
    \multicolumn{3}{c}{Time (seconds)}&\multicolumn{3}{c}{Current flow closeness}\cr
    \cmidrule(lr){2-4} \cmidrule(lr){5-7}
    & Exact & Approx &Ratio&Exact &Approx &Ratio\cr
    \midrule
    Protein
    	& 1.35  & 1.34  & 1.01
    	& 0.7451 & 0.7388 & 0.9915 \cr
    	
    Vidal
    	& 8.04  & 3.40  & 2.36
      & 1.3179 &  1.3138  & 0.9968 \cr
    	
    ego-Facebook
    	& 13.26  & 2.56  & 5.17
      & 1.0195 &  1.0195  & 1.00 \cr

	 ca-Hepth
    	& 196.67  & 15.80  & 12.44
      & 1.5340 &  1.5267  & 0.9952\cr

	 PG-Privacy
    	& 367.71  & 17.25  & 21.31
      & 0.7155 &  0.7141  & 0.9980\cr


    CAIDA
    	& 5530.32  & 30.64  & 464.96
      & 1.3948 &  1.3945  & 0.9997\cr

    ego-Twitter          &- & 562.27       & -      & - & 5.5052      & -     \cr
    com-DBLP             &- & 1700.13       & -      & - & 1.6807      & -     \cr
    com-Youtube          &- & 6374.02       &-       & - & 1.0382      & -     \cr
    roadNet-PA           &- & 9802.54       &-       & - & 0.3089 & -     \cr
    roadNet-TX           &- & 13671.05       &-       & - & 0.2260      & -     \cr
    roadNet-CA           &- & 22853.20       &-       & - & 0.2630      & -     \cr
    \bottomrule
    \end{tabular}
    \end{threeparttable}

\end{table}

\newpage
\newcommand{\etalchar}[1]{$^{#1}$}


}

\end{document}